\newcommand{\ceil}[1]{\lceil #1\rceil}
\newtheorem{prop}{Proposition}
\newtheorem{theorem}{Theorem}
\newtheorem{fact}{Fact}
\algrenewcommand{\Require}{\State \textbf{Input: }}
\algrenewcommand{\Ensure}{\State \textbf{Output: }}
\newcommand{\rmnum}[1]{\romannumeral #1}
\newcommand{\Rmnum}[1]{\expandafter\@slowromancap\romannumeral #1@}
\title{Efficient Delivery Policy to Minimize User Traffic Consumption in Guaranteed Advertising\thanks{This work was supported in part by the National Natural Science Foundation of China Grant 61222202, 61433014, 61502449, 61602440 and the China National Program for support of Top-notch Young Professionals.}}
\author[1,2]{Jia Zhang}
\author[2]{Zheng Wang}
\author[1,2]{Qian Li}
\author[1,2]{Jialin Zhang}
\author[1,2]{Yanyan Lan}
\author[1,2]{Qiang Li}
\author[1,2]{Xiaoming Sun}
\affil[1]{CAS Key Lab of Network Data Science and Technology, Institute of Computing Technology, Chinese Academy of Sciences, China.}
\affil[2]{University of Chinese Academy of Sciences, China.}
\affil[1,2]{\{zhangjia, liqian, zhangjialin, liqiang, lanyanyan, sunxiaoming\}@ict.ac.cn}
\affil[3]{Department of Computer Science, The University of Hong Kong, Hong Kong.}
\affil[3]{zwang@cs.hku.hk}
\begin{document}
\maketitle
\begin{abstract}
In this work, we study the guaranteed delivery model which is widely used in online display advertising. In the guaranteed delivery scenario, ad exposures (which are also called impressions in some works) to users are guaranteed by contracts signed in advance between advertisers and publishers. A crucial problem for the advertising platform is how to fully utilize the valuable user traffic to generate as much as possible revenue.

Different from previous works which usually minimize the penalty of unsatisfied contracts and some other cost (e.g. representativeness), we propose the novel consumption minimization model, in which the primary objective is to minimize the user traffic consumed to satisfy all contracts. Under this model, we develop a near optimal method to deliver ads for users. The main advantage of our method lies in that it consumes nearly as least as possible user traffic to satisfy all contracts, therefore more contracts can be accepted to produce more revenue. It also enables the publishers to estimate how much user traffic is redundant or short so that they can sell or buy this part of traffic in bulk in the exchange market. Furthermore, it is robust with regard to priori knowledge of user type distribution. Finally, the simulation shows that our method outperforms the traditional state-of-the-art methods.
\end{abstract}

\section{Introduction}
Online advertising is now an industry which is worth tens of billions of dollars. According to the latest report released by Interactive Advertising Bureau \cite{IABreport2015}, the annual revenue of online advertising in the US reaches \$59.6 billion in 2015, which is \$10.1 billion (or 20.4\%) higher than that in 2014. In online advertising, ad exposures (also called impressions) are sold in two major ways: either in spot through auction (real-time bidding, or RTB) or in advanced by guaranteed contracts (guarantee delivery, or GD) \cite{Chen2014dpm}. Although RTB is developing rapidly, only 13\% of publishers' revenue is generated by RTB by the year 2012 \cite{Emarketer2012w}. Large amounts of ad exposures are still sold through GD.

GD is usually adopted by top tier advertisers (Apple, P\&G, Coca-Cola etc.) and publishers (Yahoo!, Amazon etc.). Generally, top tier publishers have frequently visited web pages and large amounts of first-hand user data. These resources are valuable, and undoubtedly, heavily attractive for advertisers, especially for those top tier advertisers.
When promoting their products or brands, top tier advertisers usually prefer to sign a contract with publisher of high quality to ensure advertising effects. This contract contains targeted user types (what kinds of users the ads can only be displayed to) and the amount of exposures of corresponding ad to those targeted users.
Publisher should comply with this contract and is paid when this contract is satisfied. 

Previous works in GD domain usually focus on meeting the requirements of all contracts. Specifically, they perform study on designing delivery policy to minimize some loss functions, which are related to unsatisfied contracts and other particular requirements such as representativeness and smoothness of delivery \cite{Chen2012Ad,Lefebvre2012SHALE}. In this work we propose a novel optimization model, in which the primary target is to minimize the user traffic\footnote{The page views of users on the publish platform.} within the constraint of satisfying all exposure requirements. Compared to previous model, this model has more practical value:
\begin{itemize}
	\item For most publishers, the \emph{redundant user traffic}\footnote{The part of user traffic that is not targeted by any contract or corresponding contracts are already satisfied.} will be sold in exchange market to avoid waste. However, compared with contract orders, selling user traffic in exchange market is usually less profitable. User traffic consumption for all contracts is minimized implies that more contracts can be accepted. Consequently, the proportion of user traffic sold to exchange market is decreased. 
	\item Under the consumption minimization model, the amount of required user traffic can be estimated when the delivery policy is properly designed. This estimation can be viewed as an important indicator for publishers to predict how much user traffic is redundant or short. Then, publishers can sell or purchase user traffic in bulk, which is more profitable than doing that in scattered.
	\item The estimation for user traffic consumption is also useful for order management. When a particular contract comes, publisher can make decision on whether to accept it based on the change of user traffic consumption.
\end{itemize}

\subsection{Contributions}
In this work we propose a practical consumption minimization model for Guarantee Delivery. 
Under this model, we provide the theoretical analysis and design a near optimal delivery policy. Our contributions mainly consist of two parts which are listed as follows.

\textit{Theoretical analysis for optimal expected user traffic consumption.}\ \ 
By employing some fundamental results in stopping theory, we prove that the optimal offline expected user traffic consumption is lower bounded by an expected flow (\textsc{Theorem} \ref{theo:zoffline}). This is shown to be very useful for performance measurement of delivery policies. Besides, we construct an optimal online delivery policy based on a recursion. However, solving this recursion needs exponential time, which motivates us to design more efficient approximation delivery policy.


\textit{Efficient near optimal delivery policy.}\ \ Inspired by some insights derived from the proof of lower bound for optimal offline user traffic consumption, we design a flow based delivery policy (\textsc{Algorithm} \ref{alg:exptectedmatching}) in the online setting. This delivery policy has several advantages:
\begin{description}
	\item [(\rmnum{1})]It consumes nearly as least as possible user traffic to satisfy all exposure requirements. Let $n$ stand for the number of user types. If the total amount of all ads exposures is large enough, which is a realistic assumption in practice, the delivery policy achieves a near optimal competitive ratio $1+\frac{1}{\text{poly}(n)}$ (\textsc{Theorem} \ref{theo:2}).
	\item [(\rmnum{2})]It is robust on the priori knowledge of user type distribution. As this distribution is learned from history data, it usually has a bias compared to the real one. It is a concern that whether this bias will influence the delivery efficiency significantly. For this issue, we provide a robustness result indicating that when the bias between the learned user type distribution and the ground truth is bounded, the loss on delivery efficiency is also bounded (\textsc{Theorem} \ref{theo:robustness}).
	\item [(\rmnum{3})]Besides ad delivery, our policy can estimate user traffic consumption as well (\textsc{Theorem} \ref{theo:2}).
		As aforementioned, this estimation is very useful for order and traffic management.
	\item [(\rmnum{4})] Our delivery policy is also flexible. It can be easily implemented to reach some other requirements, such as representativeness and smoothness of delivery (\textsc{Remark} 1.).
\end{description}

According to simulation experiments, our delivery policy works well under different settings and outperforms other delivery policies including the well known HWM policy \cite{Chen2012Ad}.

\subsection{Related Works}
Online advertising is a hot topic concerned by researchers in both economic and computer science communities. We discuss the line of works that are most related to our paper.

The GD problem has been well studied as a special version of online assignment in theoretical domain. The classical online assignment problem studies adversary (or worst case) setting, such as \cite{Karp1990An,Mehta2007AdWords}. Some generalizations take distribution information into account. In the work \cite{Feldman2009OSM}, the authors study the online matching problem when types of users are drawn independently from a known distribution, which is similar to our setting. However, their target is to maximize the matching size when there are $n$ users in total. They propose an algorithm with competitive ratio 0.6702. This result is improved by works such as \cite{Bahmani2010,Jaillet2013}. Correspondingly, authors in the work \cite{Karande2011} study the online matching problem when the distribution of user types is unknown. In the work \cite{Vee2010Optimal}, the authors propose the online allocation problem with forecast and suppose there is a sample set of vertices that will come online. They provide a general framework to optimize several object functions which satisfy some constraints. According to \cite{Chen2012Ad}, this framework is sensitive to the accuracy of the forecast and requires solving a non-linear convex problem. We refer to the survey \cite{Mehta2012Online} for a systemic view.

Some more practical works have also been done in this field. In the work \cite{Chen2012Ad}, the authors propose a delivery policy which is called High Water Mark (or HWM). It is efficient, space saving and stateless so that can run in parallel on distributed machines. However, it is not designed to optimize the user traffic consumption. Numerical results in the experimental part of our work show that HWM is not good enough when considering user traffic consumption. In the work \cite{Lefebvre2012SHALE}, the authors design a policy called SHALE, which can be treated as an improvement of HWM. Actually, SHALE achieves a better performance on delivery with a loss on efficiency. However, SHALE is a heuristic algorithm, and usually needs specific number of iterations, which is sensitive to input, to get good enough results. A common problem for HWM and SHALE is that expected user traffic consumption is hard to estimated. Both HWM and SHALE take representativeness into consideration, which means ads in one campaign should be delivered somehow \emph{fairly} to users among all targeted user types, instead of only partial types. Besides, \emph{user reach} (the number of different individuals who see the ad) and \emph{delivery frequency} (the number of times a user should see the ad) are also concerned by some works \cite{Turner2014Delivering,Shen2014Robust}. 

In this work, we assume the user type distribution $\mathcal{D}$ is known in advance and the types of users are sampled from this distribution, while in other works, such as \cite{Cetintas2013Forecasting,Cetintas2011FCU}, authors study how to estimate user traffic in a particular period. Another line of works concern the pricing mechanism and revenue maximization in GD \cite{Bharadwaj2010Pricing,Radovanovic2012Risk}.  In the work \cite{Bharadwaj2010Pricing}, two algorithms are provided to calculate the price of each guaranteed contract based on the price of each targeted user.

As RTB is adopted by more and more publishers, some works study how to combine RTB and GD to make more profit. In the work \cite{Ghosh2009bid}, publishers are considered to act as bidders for guaranteed contracts.  In works \cite{Araman2010Media,Balseiro2011Yield,Chen2014dpm}, a similar framework is considered that publishers can dynamically select which ad campaign contract to accept. In the work \cite{Chen2014dpm}, it is shown that RTB and GD can be seamlessly combined programmatically and the authors provide some results about what percentages of user traffic should be sold in GD and what the price is.

\section{Consumption Minimization Model}
In guaranteed delivery (GD) scenario, users are categorized into a set of \emph{atomic} types according to their traits, e.g., geographic, gender, age etc. \emph{Atomic} means each user visiting the publish platform can only belong to a unique type. Each advertiser has one or more ad campaigns and each campaign corresponds to a contract with publisher. The contract specifies how many times the corresponding ad should be exposed and the set of targeted user types. See Figure \ref{fig:1} as an example. The first ad campaign reserves 200 thousand exposures and all users at the age of 20 to 25 are targets of this campaign.
Thus when a user who satisfies this requirement visits the platform, corresponding ad can be displayed to this user.
\begin{figure}[h!]
	\centering\includegraphics[width=2.2in]{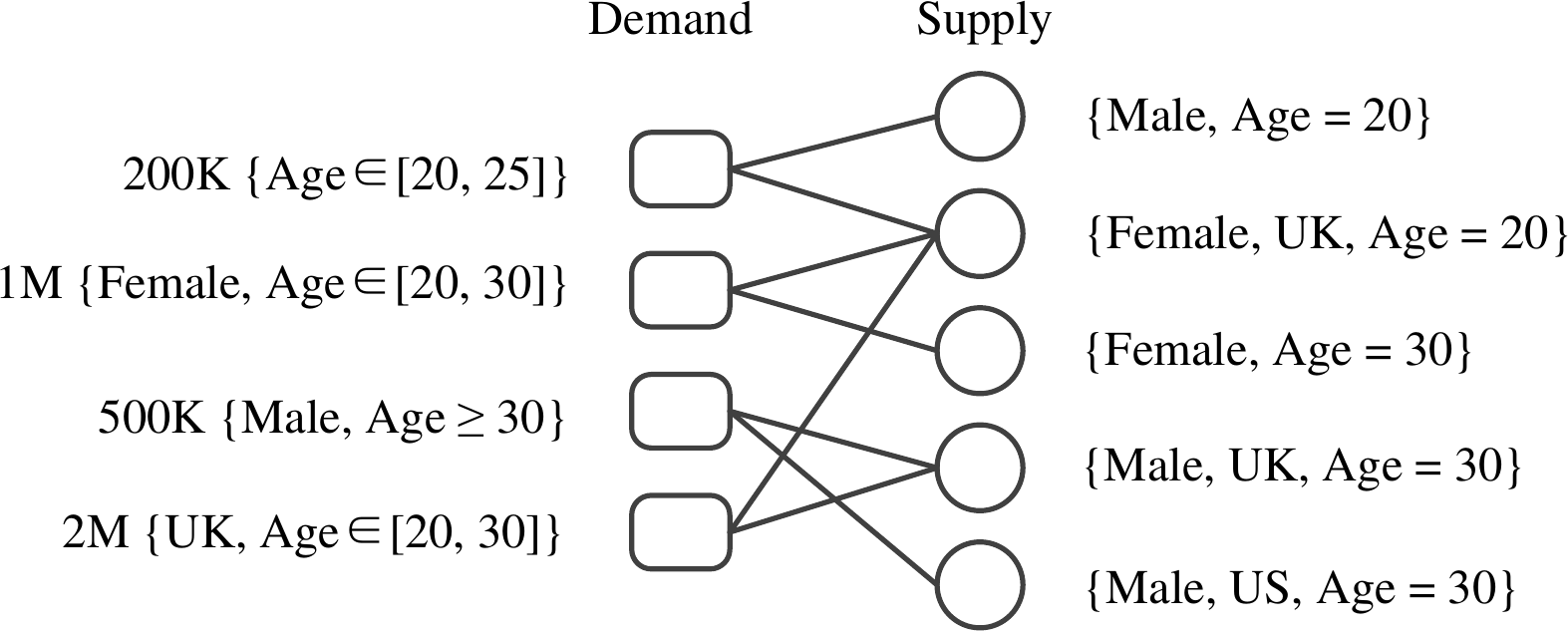}
	\caption{\label{fig:1} An example of ads campaigns in GD. 
}
\end{figure}

Formally speaking, the ad campaigns and user types relationship can be modeled as a demand-supply graph $G=(A,U,E)$ with $|A|=m$ and $|U|=n$. The nodes in $A$ stand for $m$ ad campaigns, while nodes in $U$ stand for $n$ atomic user types. For each ad campaign $a_i \in A$ and user type $u_j \in U$, the edge $(a_i,u_j)\in E$ means that $u_j$ is targeted by $a_i$. Conventionally, for any $v\in A\cup U$, we use $\Gamma(v)$ to stand for the neighbors of $v$ in $G$. $W_i$ stands for total amount of exposures that $a_i$ reserves in the contract. We denote $W=\{W_1, W_2,\cdots,W_m\}$ for convenience, and let $M\triangleq \sum_{i=1}^mW_i$. Under the GD setting, we can safely assume $M\gg n$.

We assume the users come one by one online. Once a user visits the publish platform, her type is revealed and the publisher should irrevocably deliver only one ad\footnote{Our model and policy can be easy extended to meet multiple delivery requirement. See supplemental material for detail.}.
Usually, there are large amounts of users' visiting logs available on the publish platform, thus, we assume that each user type is drawn from a distribution $\mathcal{D} = \{p_j\,|\,1\leq j\leq n\}$ which can be learned from those log data. That's to say, for any unseen user, she belongs to type $u_j$ with probability $p_j$, which satisfies $\sum_{j=1}^np_j = 1$. We define $p_{\min}$ (resp. $p_{\max}$) as the minimum (resp. maximum) arrival probability. $\mathcal{D}$ is considered as a part of the input. Thus, the input instance $I$ is a triple $(G,W,\mathcal{D})$.

Our objective is to design an ads delivery policy to minimize the expectation of user traffic consumed. Here the \emph{user traffic consumed} means the number of users' visits when all contracts are fulfilled. It is a random variable due to the uncertainty of user traffic and the delivery policy. This is the \emph{consumption minimization problem for user traffic} (abbreviated as CMPUT).
For a delivery policy $\pi$, we employ the concept of competitive ratio to measure its performance. We say $\pi$ has a \emph{competitive ratio} $c$ if and only if for any input instance, $\frac{E(\pi)}{E(OPT)}\leq c$, where $E(\pi)$ is the expected user traffic consumed by policy $\pi$ and $E(OPT)$ is the expectation of optimal user traffic consumption when the publisher is allowed to modify previous delivery decisions. We also call $E(OPT)$ {optimal offline expected user traffic consumption}.

We provide several important definitions for further analysis.
	Given an input instance $I=(G,W,\mathcal{D})$ and a non-negative number $T$, we can construct an \emph{expected network} $F_G(T)$ by adding two nodes $s$ (source) and $t$ (sink) into $G$ and set the capacities as follows. For $a_i\in A$, we connect $s$ with $a_i$ and set the capacity of the edge $(s,a_i)$ as $W_i$. For $u_j\in U$, we connect $t$ with $u_j$ and set the capacity of the edge $(u_j,t)$ as $Tp_j$. For those edges in $E$, we just set their capacities as $M$.
Let $\textsc{Max-Flow}(F_G(T))$ stand for the maximum flow that can pass through $F_G(T)$. $Z_{flow}$ is defined as the minimum value of $T$ that satisfies $\text{\textsc{Max-Flow}}(F_G(T))=M$.

\section{Theoretical Analysis}
In this section we first prove a lower bound of the optimal offline expected user traffic consumption, and then construct an optimal online policy based on a recursion which needs exponential time to solve. To beat this hardness we develop an efficient and near optimal flow based policy.
\subsection{Lower Bound for Optimal Offline User Traffic Consumption}
\label{sec2:name}
The main tool we employ to develop our lower bound is \emph{Wald's identity} \cite{wald1944cumulative}. Wald's identity is a foot stone in optimal stopping theory. Here we use a variant introduced in work \cite{Blackwell1946On}, and it can be described as follow.

{\bf Wald's Identity.} Let $X_1, X_2,\cdots$ be i.i.d. random variables with $E(X_1)<\infty$, and let $\tau$ be a stopping rule which is independent with $X_{\tau+1}, X_{\tau+2},\cdots$ and satisfies $E(\tau)<\infty$, then there has
$E(\sum_{i=1}^{\tau}X_i)=E(\tau)E(X_1).$

To use Wald's Identity, we need an upper bound of $E(OPT)$. Consider a simple delivery policy called \textsc{Random}, in which each ad campaign $a_i$ is expanded as $W_i$ ads. These ads target the same types of users and each of them needs to be exposed only once. When a user that belongs to $u_j$ comes, \textsc{Random} uniformly chooses one ad from all valid ones waiting for exposure. For each $u_j$, denote {$W(u_j)=\sum_{a_i\in \Gamma(u_j)}W_i$}, we have the following result.
\begin{prop}
	{\small
	\label{lemm:optupperbound}
	$$E(\text{\textsc{Random}})\leq \int_{0}^{\infty}(1-\prod_{i=1}^m(1-e^{-t\sum_{u_j\in \Gamma(a_i)}\frac{p_j}{W(u_j)}})^{W_i})dt.$$
	}
\end{prop}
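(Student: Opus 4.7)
My plan is to recognize the right-hand side as the expected maximum of $M$ independent exponentials---for each $i \in [m]$, $W_i$ copies of rate $\lambda_i := \sum_{u_j \in \Gamma(a_i)} p_j / W(u_j)$---since $\int_0^\infty (1 - \prod_i (1-e^{-t\lambda_i})^{W_i})\, dt$ is precisely $\int_0^\infty P(\max > t)\, dt$ for such a family. The strategy is to introduce a wasteful variant \textsc{ModRandom} of \textsc{Random}, show by coupling that \textsc{Random}'s stopping time is pointwise dominated by that of \textsc{ModRandom}, and then show that under Poissonization \textsc{ModRandom}'s expected stopping time equals exactly the integral above.

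I would define \textsc{ModRandom} as follows: at each arrival of type $u_j$, pick an ad uniformly among \emph{all} $W(u_j)$ original ads in $\Gamma(u_j)$; if the chosen ad has already been exposed, waste the arrival rather than redrawing. Couple \textsc{Random} and \textsc{ModRandom} on the same arrival sequence and on the same uniformly random permutation of the $W(u_j)$ ads of $\Gamma(u_j)$ at each step, with \textsc{Random} scanning the permutation for the first still-waiting ad and \textsc{ModRandom} only considering the first element. A simple induction then shows that the set of exposed ads under \textsc{Random} always contains that under \textsc{ModRandom}, yielding $\tau_R \le \tau_{MR}$ pointwise and therefore $E(\textsc{Random}) = E(\tau_R) \le E(\tau_{MR})$.

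To analyze \textsc{ModRandom} I would Poissonize: embed the discrete arrivals into a unit-rate Poisson process whose marks are i.i.d.\ types from $\mathcal{D}$, so that type-$u_j$ arrivals form an independent Poisson process of rate $p_j$. Because at each arrival one ad is independently chosen uniformly among $W(u_j)$ possibilities, the independent-marking / thinning property of Poisson processes gives that the ``selection processes'' for distinct expanded ads are mutually independent Poisson processes, with ad $(i,k)$'s process having rate $\sum_{u_j \in \Gamma(a_i)} p_j / W(u_j) = \lambda_i$. Hence its exposure time $T_{(i,k)}$ is Exp$(\lambda_i)$ independently across ads, and the continuous-time stopping time $T^*_{MR} = \max_{(i,k)} T_{(i,k)}$ satisfies $E(T^*_{MR}) = \int_0^\infty P(T^*_{MR} > t)\, dt = \int_0^\infty (1 - \prod_i(1-e^{-t\lambda_i})^{W_i})\, dt$. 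Finally, Wald's identity applied to the Exp$(1)$ inter-arrival times (which are independent of $\tau_{MR}$, since $\tau_{MR}$ depends only on marks and selections, not timings) gives $E(T^*_{MR}) = E(\tau_{MR})$, closing the chain.

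The main obstacle I expect is justifying the mutual independence of the exposure times across ads: at each discrete arrival exactly one of the $W(u_j)$ ads is selected, which a priori creates negative correlation, so the argument must lean squarely on the independent-marking view of Poisson processes rather than the discrete picture. A secondary subtlety is the coupling---when \textsc{Random} scans past already-exposed ads in the permutation, one must verify that the ad it eventually exposes is still uniformly distributed over the currently-waiting ads in $\Gamma(u_j)$, which follows because a uniformly random permutation, restricted to any prescribed subset of positions, remains uniform on those positions.
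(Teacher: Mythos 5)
Your proof is correct and follows essentially the same route as the paper: the paper also bounds \textsc{Random} by the variant in which each expanded ad's per-arrival selection probability is frozen at its initial value $\sum_{u_j\in\Gamma(a_i)}p_j/W(u_j)$ (your \textsc{ModRandom}), identifies this as a non-uniform coupon collector, and reads off the integral from the cited work of Flajolet et al. The only difference is that you supply the details the paper asserts or outsources---an explicit coupling establishing the pointwise domination $\tau_R\le\tau_{MR}$, and a Poissonization-plus-Wald derivation of the integral---both of which are sound.
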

\begin{proof} In \textsc{Random}, each expanded ad will be delivered with probability $\sum_{u_j\in \Gamma(a_i)}\frac{p_j}{W(u_j)}$ when the  first user comes. This probability will increase with more and more ads are delivered. If we fix the probability just as $\sum_{u_j\in \Gamma(a_i)}\frac{p_j}{W(u_j)}$, the GD problem is reduced to the \emph{non-uniform coupon collection problem}, and according to the work \cite{flajolet1992birthday}, the expected user traffic consumption can be calculated by an integral as shown in this proposition. Clearly, this integral will converge to a value, which is an upper bound of $E(\textsc{Random})$.
\end{proof}

According to above proposition, $E(OPT)\leq E(\textsc{Random})<\infty$.

Denote $U_j$ as the number of users belong to type $u_j$ when all contracts are fulfilled and let $[n]$ stand for the set $\{1,2,\cdots,n\}$. We can use Wald's Identity to show the following proposition.
\begin{prop}\label{lemm:stopping}
	$E(U_j)=p_jE(OPT)$ for any $j\in[n]$.
\end{prop}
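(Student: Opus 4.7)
The plan is to apply Wald's identity directly, treating $\tau = OPT$ as the stopping time and, for each fixed $j$, treating the indicator sequence of type-$u_j$ arrivals as the i.i.d.\ summands.

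First I would set up the notation carefully. For the sequence of arriving users, let $X_i^{(j)} = \mathbf{1}[\text{the }i\text{-th user has type }u_j]$. Since user types are drawn i.i.d.\ from $\mathcal{D}$, the random variables $X_1^{(j)}, X_2^{(j)}, \ldots$ are i.i.d.\ Bernoulli$(p_j)$, and in particular $E(X_1^{(j)}) = p_j < \infty$. Let $\tau$ denote the (random) number of user arrivals at which the offline optimal policy OPT has fulfilled every contract and stops. By definition $U_j = \sum_{i=1}^{\tau} X_i^{(j)}$, so if Wald's identity applies we immediately get $E(U_j) = E(\tau)\,E(X_1^{(j)}) = p_j\,E(OPT)$, which is what we want.

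Next I would verify the two hypotheses of the version of Wald's identity stated in the excerpt. (a) Finiteness of $E(\tau)$: because even the naive \textsc{Random} policy satisfies $E(\textsc{Random})<\infty$ by Proposition~\ref{lemm:optupperbound}, and OPT is optimal offline, we have $E(\tau) = E(OPT)\leq E(\textsc{Random})<\infty$. (b) The stopping-rule condition that $\tau$ is independent of $X_{\tau+1}^{(j)}, X_{\tau+2}^{(j)},\ldots$: the event $\{\tau = k\}$ is determined by the first $k$ user types (and the offline policy's reshuffling, which itself only uses those first $k$ types), hence is independent of the i.i.d.\ tail $X_{k+1}^{(j)}, X_{k+2}^{(j)},\ldots$ This is exactly the measurability/stopping-time condition needed.

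With both hypotheses in hand, Wald's identity yields
\[
E(U_j) \;=\; E\!\left(\sum_{i=1}^{\tau} X_i^{(j)}\right) \;=\; E(\tau)\,E(X_1^{(j)}) \;=\; p_j\,E(OPT),
\]
completing the proof. The only subtle point — and the one step I would be most careful about — is (b): one must argue that even though OPT is an \emph{offline} policy allowed to revise its past assignment decisions, the stopping time $\tau$ (the moment all contracts become fulfillable) depends only on the prefix of arrivals already observed, so the classical stopping-rule hypothesis of Wald's identity is genuinely satisfied. Once this is granted, the rest is a one-line application of the identity.
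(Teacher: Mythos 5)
Your proposal is correct and follows essentially the same route as the paper: both define the indicator variables of type-$u_j$ arrivals and apply Wald's identity with $\tau = OPT$ as the stopping time, using $E(OPT)\leq E(\textsc{Random})<\infty$ from Proposition~\ref{lemm:optupperbound} for the finiteness hypothesis. You are merely more explicit than the paper in verifying the independence-of-the-tail condition, which the paper asserts without elaboration.
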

\begin{proof}
This is a direct application of Wald's Identity.
Fix any $j\in [n]$, define
\begin{equation*}
	X_t=\left\{
	\begin{aligned}
		&1,& &\text{the $t$-th user belongs to type $u_j$},&\\
		&0,& &\text{otherwise}.&
	\end{aligned}\right.
\end{equation*}
 We can find that $\{X_t\}_{t\geq 1}$ is a sequence of i.i.d. random variables and $E(X_t)=p_i<\infty$. In addition, the stopping time  of the policy, i.e. $OPT$ is independent of $X_{OPT+1},X_{OPT+2},\cdots$ and $E(OPT)<\infty$. Thus $E(U_j)=E\left(\sum_{t=1}^{OPT}X_t\right)=p_jE(OPT).$
\end{proof}
\begin{theorem}
	\label{theo:zoffline}
For any input instance, $E(OPT)\geq Z_{flow}$.
\end{theorem}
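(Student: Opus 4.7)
The plan is to turn the optimal offline policy into a feasible flow in the expected network $F_G(E(OPT))$ of value $M$, which by definition of $Z_{flow}$ forces $E(OPT) \geq Z_{flow}$.

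First I would set up the expected assignment. Fix an optimal offline policy $\pi^*$ and let $f_{ij}$ denote the expected number of times $\pi^*$ delivers ad $a_i$ to a user of type $u_j$ (over the random user stream and any internal randomness of $\pi^*$). By construction $f_{ij} = 0$ whenever $(a_i,u_j) \notin E$. I would then read off the capacity constraints of $F_G(E(OPT))$ one by one. Because $\pi^*$ satisfies every contract with probability one, $\sum_j f_{ij} = W_i$ for each $a_i$, so sending $f_{ij}$ along the edge $(a_i,u_j)$ uses exactly $W_i$ from the $s \to a_i$ edge and is trivially bounded by $M$ on each internal edge. For the $u_j \to t$ edges, the key observation is that a policy cannot deliver more ads to type-$j$ users than the number of type-$j$ users who actually arrive. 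Taking expectations, $\sum_i f_{ij} \leq E(U_j)$, and Proposition~\ref{lemm:stopping} gives $E(U_j) = p_j E(OPT)$, which is exactly the capacity $Tp_j$ of the edge $(u_j,t)$ in $F_G(T)$ at $T = E(OPT)$.

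Thus $\{f_{ij}\}$ extends to a valid $s$–$t$ flow in $F_G(E(OPT))$ of total value $\sum_i W_i = M$. Since the max flow of $F_G(T)$ is monotone nondecreasing in $T$ and is upper bounded by $M$ (by the $s$-cut), this shows $\textsc{Max-Flow}(F_G(E(OPT))) = M$. By the definition of $Z_{flow}$ as the \emph{minimum} such $T$, we conclude $Z_{flow} \leq E(OPT)$.

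The one subtle step is invoking Proposition~\ref{lemm:stopping}, which requires $E(OPT) < \infty$; that is the reason Proposition~\ref{lemm:optupperbound} was proved first, giving the finite upper bound $E(OPT) \leq E(\textsc{Random}) < \infty$ needed to legitimately apply Wald's identity. Beyond this, the argument is essentially a bookkeeping check that expectations of per-slot delivery counts are a feasible fractional matching, and I do not expect any further obstacle; no exchange of limits or concentration is required because we only compare expectations.
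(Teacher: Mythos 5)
Your proof is correct, and it takes the dual route to the one in the paper. You certify $\textsc{Max-Flow}(F_G(E(OPT)))=M$ from the primal (flow) side: the expected delivery counts $f_{ij}$ of the optimal policy form an explicit feasible $s$--$t$ flow of value $M$, with the sink-edge capacities $p_jE(OPT)$ verified via $\sum_i f_{ij}\le E(U_j)=p_jE(OPT)$ from \textsc{Proposition}~\ref{lemm:stopping}; minimality of $Z_{flow}$ then finishes the argument. The paper instead argues on the cut side: by minimality of $Z_{flow}$ and max-flow min-cut there is a ``bottleneck'' set $S\subseteq A$ with $\sum_{a_i\in S}W_i\ge\sum_{u_j\in\Gamma(S)}p_jZ_{flow}$, and since the process cannot stop before $\sum_{u_j\in\Gamma(S)}U_j\ge\sum_{a_i\in S}W_i$, the same identity $E(U_j)=p_jE(OPT)$ yields the bound. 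Both arguments hinge on exactly the same key lemma (Wald's identity via \textsc{Proposition}~\ref{lemm:stopping}, which in turn needs the finiteness from \textsc{Proposition}~\ref{lemm:optupperbound}, as you correctly note). What yours buys is that it avoids the slightly delicate step of asserting that a tight cut exists at the minimizing $T$; what the paper's buys is that it isolates a single deficient set of campaigns as the bottleneck, which is the structural insight that motivates the flow-based policy later. Two small points of hygiene in your write-up: you only need $\sum_j f_{ij}\ge W_i$ (an optimal policy will not over-deliver, but nothing forces exact equality a priori, and a flow of value $M$ can be extracted either way), and the monotonicity of the max flow in $T$ is not actually needed since $Z_{flow}$ is defined as the minimum over all $T$ achieving value $M$.
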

\begin{proof}
	According to the definition of $Z_{flow}$ and the property of maximum flow, there exists a subset $S\subseteq A$ such that
		$\sum_{a_i\in S}W_i\geq \sum_{u_j\in \Gamma(S)}p_jZ_{flow},$
	where $\Gamma(S)\subseteq U$ is the set of user types targeted by ad campaigns in $S$. Otherwise, $Z_{flow}$ can be decreased to a smaller value which still satisfies $\textsc{Max-Flow}(F_G(Z_{flow}))=M.$ It is a contradiction to the definition of $Z_{flow}$.

On the other hand, the delivery procedure can stop only if the ad corresponding to campaign $a_i$ has been delivered $W_i$ times, thus $\sum_{u_j\in \Gamma(S)} U_j$ is always no less than $\sum_{a_i\in S}W_i$.

According to \textsc{Proposition} \ref{lemm:stopping}, we have $E(U_j)=p_jE(OPT)$ for any $i\in[n]$, thus
{\small
\begin{equation*}
	\sum_{u_j\in \Gamma(S)}p_jE(OPT)=E\sum_{u_j\in \Gamma(S)} U_j\geq \sum_{u_j\in \Gamma(S)}p_jZ_{flow},
\end{equation*}
}which implies $E(OPT)\geq Z_{flow}$.
\end{proof}

\subsection{Optimal Online Delivery Policy}
Suppose the first user comes with type $u_j$. If no ad campaign targets $u_j$, nothing will be delivered to this user; otherwise, the delivery policy should decide which ad in $\Gamma(u_j)$ will be delivered. It is easy to argue that if  $\Gamma(u_j)$ is not empty, delivering nothing can not be better than delivering any arbitrary ad to the user. Suppose $a_i\in \Gamma(u_j)$ is delivered, then the problem is reduced to a sub problem in which ad $a_i$ should be delivered $W_i-1$ times.  For convenience, we construct an input $I(i)=(G,W(i),\mathcal{D})$, where $W(i)=\{W_1,\dots,W_{i-1},\max\{0, W_i-1\},W_{i+1},\dots,W_m\}.$ Based on above analysis, in the optimal online delivery policy $\pi^*$, $a_i$ must be chosen to satisfy that
	$a_i = \arg\min_{a_k\in \Gamma(u_j)}E\left(\pi^*(I(k))\right)$,
where $E(\pi^*(I(k)))$ stands for expected user traffic consumed by $\pi^*$ on input instance $I(k)$.
According to the properties of conditional expectation, we have
	$E(\pi^*(I)) = 1+\sum_{j=1}^np_j\min_{a_i\in \Gamma(u_j)}E(\pi^*(I(i)))$.
Thus, $\pi^*$ can be constructed from bottom to top by dynamic programming (DP).
However, it is easy to check that there are $\prod_{i=1}^m (W_i+1)$ sub problems in this DP. Thus, exponential time is needed to construct $\pi^*$ in this way. In fact, we believed that $\pi^*$ is hard to construct in any way. This motivates us to design some delivery policies with polynomial running time. 

\subsection{Delivery Policy Design for CMPUT}
\label{sec3:name}
According to previous analysis, $Z_{flow}$ is highly related to $E(OPT)$. It somehow reminds us to use expected flow to guide ads delivery. Inspired by this insight, we design a flow based delivery policy.

\subsubsection{Flow Based Delivery Policy}
As shown in \textsc{Algorithm} \ref{alg:exptectedmatching}, the flow based delivery policy can be divided into two parts: the offline part and the online part. In the offline part, we first construct an expected network with integer capacities on all edges, that's $\hat{F}_G(\hat{Z})$ used in \textsc{Algorithm} \ref{alg:exptectedmatching}. The difference between $F_G(\hat{Z})$ and $\hat{F}_G(\hat{Z})$ lies in that the capacity of edge $(u_j, t)$ is set to be $\ceil{\hat{Z}p_j}$ for $j\in [n]$ in the latter. Then we find the minimum value of $\hat{Z}$ while ensuring the max flow on $\hat{F}_G(\hat{Z})$ is no less than $M$.
To find such a value of $\hat{Z}$, we extend the idea of binary search, that's to say, we enumerate a value $\hat{Z}$, and calculate $\textsc{Max-Flow}(\hat{F}_G(\hat{Z}))$, then adjust the value of $\hat{Z}$ with a similar idea of binary search. According to the properties of maximum flow, the traffic on each edge in maximum flow is a non-negative integer when the capacities of all edges are integers. This is why we use $\hat{F}_G(\hat{Z})$ instead of $F_G(\hat{Z})$.
In the online part, when a user comes, we deliver a proper ad according to the maximum flow of $\hat{F}_G(\hat{Z})$. 

Note that we haven't specified how to deliver an ad according to the maximum flow of {$\hat{F}_G(\hat{Z})$} in the line \ref{line:alloc}. Actually, \textsc{Flow-Based-Delivery-Rule} stands for a class of delivery rules which satisfy that
when there are no less than {$\ceil{\hat{Z}p_j}$} users of type $u_j$ come to the publish platform, either all ads of $a_i$ are delivered or no less than {$C_{i,j}$} ads of $a_i$ are delivered to users of type $u_j$.
We can implement a \textsc{Flow-Based-Delivery-Rule} according to particular requirements, such as representativeness and smoothness of delivery. 
Here we give a simple implementation in \textsc{Algorithm} \ref{alg:greedy}.

Next, we will prove that the performance of \textsc{Algorithm} \ref{alg:exptectedmatching} is quite good by showing that it is close to the lower bound provided in \textsc{Theorem} \ref{theo:zoffline} when $Mp_{\min}=\Omega(n^{-\varepsilon})$ for $\varepsilon>0$. The main tool we employ is  the Chernoff Bound, which is an important tool in probability theory. 
Besides the Chernoff Bound, the full proof of \textsc{Theorem} \ref{theo:2} mainly depends on delicate analysis of expectation. We leave this proof in supplemental material due to space limitation.

\IncMargin{1em}
\SetAlFnt{\small}
\SetAlCapFnt{\small}
\SetAlCapNameFnt{\small}
\begin{algorithm}[h!]
	\caption{\label{alg:exptectedmatching}Expected Flow Based Delivery Policy}
	\Indm
	\textsc{Input:} $I=(G,W,\mathcal{D})$.\\
	\textsc{Offline part:}\\
	\Indp
	Set $\hat{Z}\gets M/2.$\\
	\Repeat{$\textsc{Max-Flow}(\hat{F}_G(\hat{N}))\geq M$}{
	Set $\hat{Z}\gets 2\hat{Z}.$\\
	Construct an expected network $\hat{F}_G(\hat{Z})$.\\
	}
	Set $Begin\gets \hat{Z}/2$, $End \gets \hat{Z}$.\\
	\Repeat{$Begin\geq End$}{
		Set $Mid\gets (Begin+End)/2$.\\
		\If{$\textsc{Max-Flow}(\hat{F}_G(\hat{N})) < M$}
		{
			Set $Begin\gets Mid+1$.\\
		}
		\Else{
			Set $End\gets Mid$.\\
		}
	}
	Set $\hat{Z}\gets Begin$.\\
	\For{each $(a_i,u_j)\in E$}{
	Set $C_{i,j}$ as the traffic passing through $(a_i,u_j)$ in the maximum flow of $\hat{F}_G(\hat{Z})$.
	}
	\Indm
	\textsc{Online part:}\\
	\Indp
	\For{each user comes to the publish platform}
	{

		\textbf{if }$M=0$ \textbf{then\ \ exit}.\\
		Suppose $u_j$ is the type of this user.\\
		\lnlset{line:alloc}{REM}Set $i\gets$ \textsc{Flow-Based-Delivery-Rule}$(u_j)$.\\
		\lnlset{line:if}{IF\ }\If{$i=NULL$}
		{Taking this user for other purpose.\\}
		Deliver an ad of $a_i$ to current user.\\
		Set $W_i\gets W_i-1$ and $M\gets M-1$.\\
	}
\end{algorithm}
\begin{algorithm}[h!]
	\caption{\label{alg:greedy}\textsc{Greedy-Delivery-Rule}}
	\Indm
	\textsc{Greedy-Delivery-Rule}$(u_j)$\\
	\Indp
	Set $i'\gets \arg\max_{a_i\in \Gamma(u_j), W_j>0}\left\{C_{i,j}\right\}$.\\
	\If{we can't find such an $i'$}
	{
		\textbf{return} $NULL$.
	}
	Set $C_{i',j}\gets C_{i',j}-1$.\\
	\textbf{return} $i'$.
\end{algorithm}

\newcommand{\theoperformance}{Given $\varepsilon>0$, when $Mp_{\min}=\Omega\left(n^{\varepsilon}\right)$, the expected user traffic consumption of \textsc{Algorithm} \ref{alg:exptectedmatching} is at most $\left(1+O\left(n^{-\varepsilon/2}\right)\right)\cdot Z_{flow}$, which implies that it achieves a competitive ratio $1+O\left(n^{-\varepsilon/2}\right)$.}
\begin{theorem}
	\label{theo:2}
	\theoperformance
\end{theorem}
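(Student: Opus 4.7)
The plan is to (i) compare the integer parameter $\hat Z$ produced by the offline part of Algorithm \ref{alg:exptectedmatching} with the continuous quantity $Z_{flow}$, (ii) bound the online stopping time of the algorithm by the time needed for each user type to appear sufficiently often, and (iii) apply Chernoff concentration to show those hitting times cluster around $\hat Z$. For step (i), since $\lceil Z_{flow}\,p_j\rceil\ge Z_{flow}\,p_j$ for every $j$, the integer $\lceil Z_{flow}\rceil$ already admits a flow of value $M$ in $\hat F_G(\cdot)$, so the binary search returns $\hat Z\le\lceil Z_{flow}\rceil\le Z_{flow}+1$. Moreover $\hat Z\ge M-n$ because the sink-side total capacity $\sum_j\lceil\hat Z p_j\rceil$ must dominate the source-side total capacity $M$; together with $Mp_{\min}=\Omega(n^\varepsilon)$ and $M\gg n$ this gives $\hat Z p_{\min}=\Omega(n^\varepsilon)$, so the additive $+1$ slack from the ceiling is negligible compared with the mean $\hat Z p_j$.

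For step (ii), let $\tau_j$ denote the first time at which $\lceil\hat Z p_j\rceil$ users of type $u_j$ have arrived. By the definition of a \textsc{Flow-Based-Delivery-Rule}, once $\tau_j$ has fired then for every neighbor $a_i$ either $a_i$ is already finished or at least $C_{i,j}$ copies of $a_i$ have been shown to type-$u_j$ users. Since $\sum_j C_{i,j}=W_i$ by flow conservation in $\hat F_G(\hat Z)$, all contracts are satisfied once every $\tau_j$ has fired, so the online stopping time $T^*$ satisfies $T^*\le\max_j\tau_j$ pointwise.

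For step (iii), writing $t=(1+\delta)\hat Z$, the event $\{\tau_j>t\}$ coincides with $\{N_j(t)<\lceil\hat Z p_j\rceil\}$ where $N_j(t)\sim\mathrm{Binomial}(\lfloor t\rfloor,p_j)$ has mean $(1+\delta)\hat Z p_j$. Chernoff gives
\[
\Pr\bigl[\tau_j>(1+\delta)\hat Z\bigr]\;\le\;\exp\!\bigl(-\Omega(\delta^{2}\hat Z p_j)\bigr)\;\le\;\exp\!\bigl(-\Omega(\delta^{2}n^{\varepsilon})\bigr),
\]
and integrating yields
\[
E[T^*]\;\le\;(1+\delta)\hat Z+\sum_{j=1}^{n}\int_{(1+\delta)\hat Z}^{\infty}\Pr[\tau_j>t]\,dt.
\]
A Gaussian-type tail computation makes the sum $o(\hat Z\cdot n^{-\varepsilon/2})$ once $\delta$ is picked of order $n^{-\varepsilon/2}$ with a constant large enough to beat the $\log n$ from the union bound over the $n$ types. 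Combining with step (i) gives $E[T^*]\le(1+O(n^{-\varepsilon/2}))Z_{flow}$, and Theorem \ref{theo:zoffline} then converts this into the competitive ratio.

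The main obstacle is the accounting in step (iii): a naive Chernoff-plus-union bound only yields $\delta^{2}\hat Z p_{\min}=\Theta(1)$ when $\delta=n^{-\varepsilon/2}$ and $\hat Z p_{\min}=\Theta(n^{\varepsilon})$, so the factor $n$ from the union bound is not killed. Repairing this either requires choosing $\delta$ with an extra $\sqrt{\log n}$ factor (acceptable because $\sqrt{\log n}=n^{o(1)}$, by slightly shrinking $\varepsilon$ inside the $O(\cdot)$), or using a cheap second-moment estimate on $\max_j\tau_j$ (viewed as a sum of negative-binomial variables) so that the exponentially small tail probability dominates the polynomial worst-case growth of $T^*$. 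The flow decomposition of step (ii) and the Chernoff bound itself are standard; it is this concentration-versus-union-bound balancing, where the assumption $Mp_{\min}=\Omega(n^{\varepsilon})$ is really being spent, that the full proof has to handle with care.
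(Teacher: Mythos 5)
Your proposal follows essentially the same route as the paper's proof: bound $\hat{Z}$ by (roughly) $Z_{flow}$ via the ceiling capacities, reduce the stopping time to the hitting times $\tau_j$ using the \textsc{Flow-Based-Delivery-Rule} property together with $\sum_j C_{i,j}=W_i$, and then apply Chernoff plus a union bound and a tail integration, finishing with \textsc{Theorem}~\ref{theo:zoffline}. The concentration-versus-union-bound balancing you flag as the main obstacle is real --- it is exactly the step the paper dismisses as ``regular and tedious'' --- and your fix (an extra $\sqrt{\log n}$ in the choice of $\delta$, absorbed by slightly shrinking the exponent) is the correct way to close it.
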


\noindent\textbf{Remark 1.}\ It should be pointed out that the flow based delivery policy can be generalized to solve the representativeness, smoothness and multiple delivery issues. The details can be found in the supplemental material.

\noindent\textbf{Remark 2.}\ The flow based delivery policy provides a threshold that how many users of each type are needed to satisfy all contracts, thus user traffic beyond this threshold can be safely utilized for other purpose, such as sell it in an exchange market. It is shown in Line \ref{line:if} of \textsc{Algorithm} \ref{alg:exptectedmatching}.

\subsubsection{Robustness}
In our model, we assume the user type distribution is accurate. However, there usually has a bias between estimated distribution and the real one. The following analysis provides a robustness guarantee for flow based delivery policy.

Given a delivery policy $\pi$, denote $E_{\mathcal{D}}(\pi(\hat{\mathcal{D}}))$ as the expected user traffic consumption when the distribution input to $\pi$ is $\hat{\mathcal{D}}$ while the real distribution is $\mathcal{D}$. We also define $\hat{p}_j$ as the estimated probability for $p_j$ in $\hat{\mathcal{D}}$. Now we have the following theorem for the flow based delivery policy. The full proof is moved to supplemental material for space limitation.

\newcommand{\theorobustness}{Denote the flow based delivery policy as $\pi_f$, for any $\delta > 0$, if $(1-\delta)p_j \leq \hat{p}_j\leq (1+\delta)p_j$ for each user type $a_j$, we have $E_{\mathcal{D}}(\pi_f(\hat{\mathcal{D}}))\leq \frac{(1+O(n^{-\varepsilon/2}))(1+\delta)}{1-\delta}E_{{\mathcal{D}}}(\pi_f({\mathcal{D}})).$}
\begin{theorem}
	\label{theo:robustness}
    \theorobustness
\end{theorem}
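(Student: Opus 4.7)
The plan is to decompose the overhead of running $\pi_f$ with the wrong distribution into two multiplicative sources and bound each one separately. When fed $\hat{\mathcal{D}}$, the algorithm picks a threshold $\hat{Z}\approx Z_{flow}(\hat{\mathcal{D}})$ and installs a capacity of $\lceil\hat{Z}\hat{p}_j\rceil$ on each $(u_j,t)$ edge. Two distortions separate $E_{\mathcal{D}}(\pi_f(\hat{\mathcal{D}}))$ from $E_{\mathcal{D}}(\pi_f(\mathcal{D}))$: (a) the chosen threshold corresponds to $\hat{\mathcal{D}}$ rather than $\mathcal{D}$, and (b) a type-$j$ capacity of $\hat{Z}\hat{p}_j$ is saturated by true arrivals at rate $p_j$, so the expected filling time is $\hat{Z}\hat{p}_j/p_j$ rather than $\hat{Z}$. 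I will show that (a) costs a factor $1/(1-\delta)$, that (b) costs a factor $1+\delta$, and that the $(1+O(n^{-\varepsilon/2}))$ slack from \textsc{Theorem} \ref{theo:2} transfers through the argument.

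For (a), the plan is to compare the two expected networks edge-by-edge. Because $\hat{p}_j\geq(1-\delta)p_j$, the capacity $T\hat{p}_j$ in $F_G(T)$ built from $\hat{\mathcal{D}}$ dominates $T(1-\delta)p_j$, which is exactly the capacity appearing in $F_G(T(1-\delta))$ built from $\mathcal{D}$. Consequently $\textsc{Max-Flow}$ of the former is at least $\textsc{Max-Flow}$ of the latter, and plugging $T=Z_{flow}(\mathcal{D})/(1-\delta)$ into the latter makes it equal to $M$. Hence $Z_{flow}(\hat{\mathcal{D}})\leq Z_{flow}(\mathcal{D})/(1-\delta)$, and the integer rounding in \textsc{Algorithm} \ref{alg:exptectedmatching} only inflates $\hat{Z}$ beyond $Z_{flow}(\hat{\mathcal{D}})$ by a $1+o(1)$ factor under the standing assumption $Mp_{\min}=\Omega(n^{\varepsilon})$, a point already treated inside the proof of \textsc{Theorem} \ref{theo:2}.

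For (b), I will re-run the Chernoff-based termination argument of \textsc{Theorem} \ref{theo:2} with a single change: the capacity imposed on type $u_j$ is $\lceil\hat{Z}\hat{p}_j\rceil$ while the arrivals come at rate $p_j$. The number of type-$j$ arrivals in $T$ steps concentrates around $Tp_j$, so a Chernoff tail combined with a union bound over the $n$ types shows that once $T\geq (1+O(n^{-\varepsilon/2}))\,\hat{Z}\,\max_j(\hat{p}_j/p_j)$ every capacity is met with probability $1-o(1)$; the residual tail contribution to the expectation is absorbed by the same estimate used in \textsc{Theorem} \ref{theo:2}. Since $\max_j(\hat{p}_j/p_j)\leq 1+\delta$, this yields $E_{\mathcal{D}}(\pi_f(\hat{\mathcal{D}}))\leq (1+\delta)(1+O(n^{-\varepsilon/2}))\hat{Z}$, and chaining with step (a) together with the sequence $Z_{flow}(\mathcal{D})\leq E_{\mathcal{D}}(OPT)\leq E_{\mathcal{D}}(\pi_f(\mathcal{D}))$ (from \textsc{Theorem} \ref{theo:zoffline} and the fact that an online policy never beats the offline optimum in this minimization setting) closes the proof. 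The main obstacle is making step (b) fully rigorous: I must verify that mismatching the capacity weights $\hat{p}_j$ and the arrival weights $p_j$ only perturbs the Chernoff-controlled tail by a harmless constant, so that the $O(n^{-\varepsilon/2})$ slack truly survives and the $\delta$-dependent factors multiply cleanly on top of it.
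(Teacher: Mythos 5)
Your proposal is correct and arrives at exactly the claimed bound, but it obtains the $(1+\delta)$ factor by a genuinely different route than the paper. Both arguments get the $1/(1-\delta)$ factor the same way, by comparing the two expected networks edge-by-edge to conclude $Z_{flow}(\hat{\mathcal{D}})\leq Z_{flow}(\mathcal{D})/(1-\delta)$ and then invoking \textsc{Theorem} \ref{theo:zoffline} (plus the online-vs-offline inequality) to pass to $E_{\mathcal{D}}(\pi_f(\mathcal{D}))$. For the $(1+\delta)$ factor, however, the paper never reopens the concentration analysis: it augments $\hat{\mathcal{D}}$ with a dummy user type of mass $\delta/(1+\delta)$ to build a distribution $\mathcal{D}'$ with $p'_j\leq p_j$ for all real types, uses a monotonicity fact about sequence expansions to get $E_{\mathcal{D}}(\pi_f(\hat{\mathcal{D}}))\leq E_{\mathcal{D}'}(\pi_f(\hat{\mathcal{D}}))$, and then applies Wald's identity to obtain the \emph{exact} identity $E_{\mathcal{D}'}(\pi_f(\hat{\mathcal{D}}))=(1+\delta)E_{\hat{\mathcal{D}}}(\pi_f(\hat{\mathcal{D}}))$, so that \textsc{Theorem} \ref{theo:2} can be applied in its matched-distribution form twice. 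You instead rerun the Chernoff termination argument of \textsc{Theorem} \ref{theo:2} with capacity weights $\hat{Z}\hat{p}_j$ against arrival rates $p_j$, absorbing the mismatch into the deadline via $\max_j(\hat{p}_j/p_j)\leq 1+\delta$; this is a valid single-pass derivation of $E_{\mathcal{D}}(\pi_f(\hat{\mathcal{D}}))\leq(1+\delta)(1+O(n^{-\varepsilon/2}))\hat{Z}$, and the expectation-of-the-tail bookkeeping indeed goes through unchanged since $Tp_j\geq\hat{Z}p_{\min}\geq(M-n)p_{\min}=\Omega(n^{\varepsilon})$ still holds. The trade-off: the paper's coupling is cleaner in that it reduces the mismatched case to the matched case exactly and needs no re-verification of tail estimates, while your argument is more self-contained and avoids introducing the auxiliary distribution $\mathcal{D}'$, at the cost of the re-verification you yourself flag as the remaining obstacle. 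One small correction: the ceiling in $\hat{F}_G$ \emph{enlarges} the sink-side capacities and therefore can only \emph{decrease} the minimal feasible threshold, so $\hat{Z}\leq Z_{flow}(\hat{\mathcal{D}})$ holds outright; your worry about a $1+o(1)$ inflation from rounding is unnecessary (and in any case works in your favor).
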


\section{Experiments }
In this section, we perform some simulations of the flow based delivery policy and several other policies on different random graphs. All experiment arguments are selected to approximate the real world system.
The results show the superiority of the flow based delivery policy.
\subsection{Experimental Setup}
\noindent\textbf{Hardware}\ \ A desktop with an Intel i5-4570 CPU @ 3.2GHZ and a 4GB DDR3 memory.

\noindent\textbf{Settings}\ \
Restricted by the business model of GD, the size of demand-supply graph in real system is usually not large. Thus we conduct experiments on a random demand and supply graph with $500$ ad campaigns and $1000$ user types. Edges are generated uniformly and randomly while the average degree of user types are fixed. To generate the user types distribution $\mathcal{D}$, we choose a random value $r_j>0$ for each $u_j$, and set $p_j = \frac{r_j}{\sum_{l=1}^{1000} r_l}$. $r_j$ is chosen in two ways: $r_j$ is drawn uniformly from $[0,1]$ (we call this method Random-Normalization) or $r_j$ equals to $\frac{1}{1000}+\varepsilon$, where $\varepsilon\sim \mathcal{N}(0, 1/6000^2)$ (we call this method Gauss-Perturbation).

\noindent\textbf{Delivery Policies}\ \
We compare the flow based delivery policy (FB) with other four delivery policies, including \textsc{Random}, HWM, and two intuitive greedy policies: \textsc{Probability-Greedy} (PG) and \textsc{Degree-Greedy} (DG).

Policies DG and PG are based on a straightforward idea that ad campaigns that are \emph{difficult} to be fulfilled should have a higher priority. Specifically, when a user comes, the ad with minimum degree on demand-supply graph is chosen in DG, while in PG, it calculates a \emph{delivery value} $r_i = \sum_{u_j\in \Gamma(a_i)}\frac{p_j}{W(u_j)}$ for each ad campaign $a_i$, and delivers an ad with minimum delivery value. Recall that $W(u_j) = \sum_{a_i\in \Gamma(u_j)}W_i$.
For these policies, we compare their competitive ratios and the robustness on different random graphs and different user visiting sequences. On each generated random graph, we simultaneously run five policies with 100 sampled user sequences. The competitive ratio for each policy is estimated as dividing the average user traffic consumed on these sampled sequences by the average optimal user traffic needed.

\subsection{Experiments under Different Average Degrees}
The first experiment is conducted under different average degrees of user types. With a specific average degree, there are 100 input instances generated independently and randomly. Half of them use Random-Normalization to generate $\mathcal{D}$, the others use Normalization-Perturbation. $W_i$ is drawn uniformly from $[50,100]$. In the second experiment we will show that the amount of exposures has little influence on performance, thus a small value for $W_i$ is enough.

Experiment results for average degrees $5,10,15,20$ and $25$ are shown in Figure \ref{fig:degree:change}. The height of the bar stands for the average competitive ratios on 50 input instances with corresponding setting while the vertical lines indicate the fluctuation ranges of those competitive ratios. According to this figure, the flow based delivery policy has a much better competitive ratio under all average degrees. These fluctuation ranges of competitive ratios also show that the flow based delivery policy is more robust on different graphs. 

Surprisingly, DG is the second best policy.
When the average degree is $5$, its average competitive ratio is very close to that of the flow based delivery policy with a gap about $0.015$ for Random-Normalization and $0.035$ for Gauss-Perturbation. However, these gaps will be enlarged as the average degree increases to 10, 15 and 20. 
From Figure \ref{fig:degree:change} we can also find the fact that the performances of all delivery policies will be improved continuously with average degree increasing. In fact, when the average degree is large enough, all polices are near optimal\footnote{Considering an extreme case that when the demand-supply graph is complete, any reasonable delivery policy will be optimal.}. 
However, for the purpose of accurately targeting, the demand-supply graph is usually very sparse in real world.


In Figure \ref{fig:degree:50}, we show more detail experiment results of five delivery polices on 50 input instances with Random-Normalization and average degree 10. Figure \ref{fig:avg} shows the results of competitive ratios. The horizontal axis stands for 50 input instances. As we can see, the flow based delivery policy achieves a much better competitive ratio on each input instance and it works more steadily. The worst case performance among all sampled user visiting sequences on 50 input instances is shown in Figure \ref{fig:worst}. Clearly, the flow based delivery policy is more robust on the worst sampled sequence too.

\subsection{Experiments under Different Exposure}
In this part, we compare five polices under different exposure settings. The number of exposures for each ad campaign is uniformly drawn from 100 to 2500, 5000, 10000 and 20000. We only consider average degrees of $5$ and $10$. Similarly, 100 graphs are drawn independently, and for 50 of them, we construct distributions by Random-Normalization and for the rest, Gauss-Perturbation is used. Then input instances are constructed with different exposure settings.

\begin{figure}[h!]
	\centering
	\begin{subfigure}[b]{0.48\columnwidth}
		\centering\includegraphics[height=0.87in]{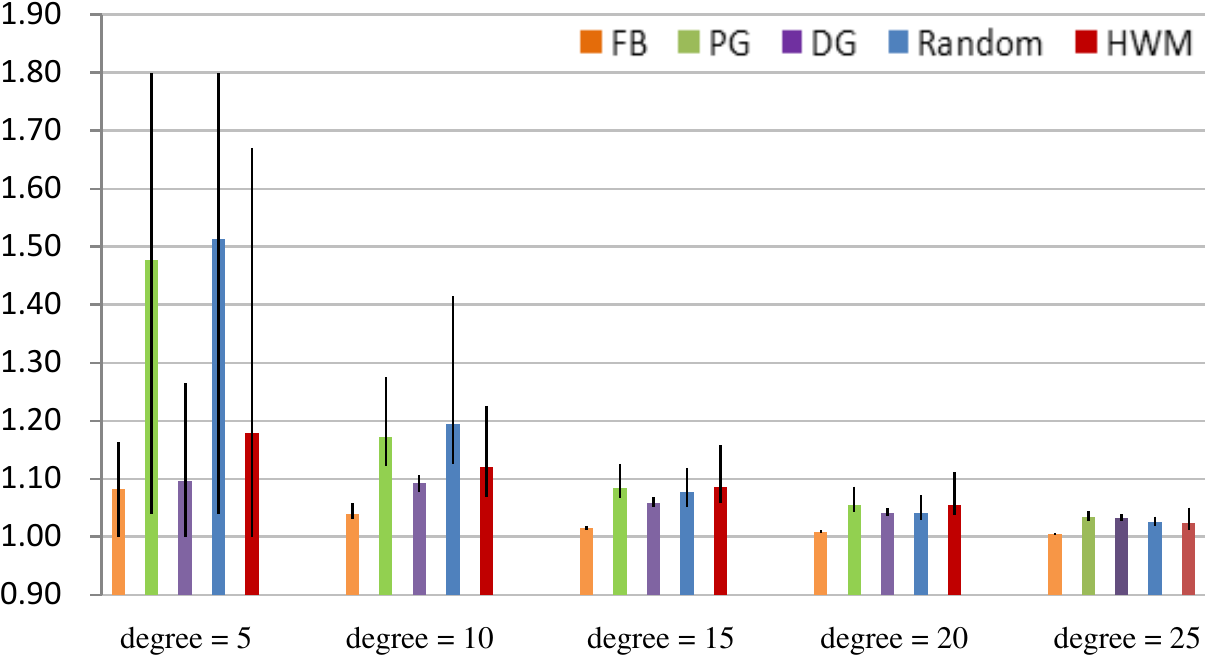}
		\caption{Random-Normalization}
	\end{subfigure}
	\begin{subfigure}[b]{0.48\columnwidth}
		\centering\includegraphics[height=0.87in]{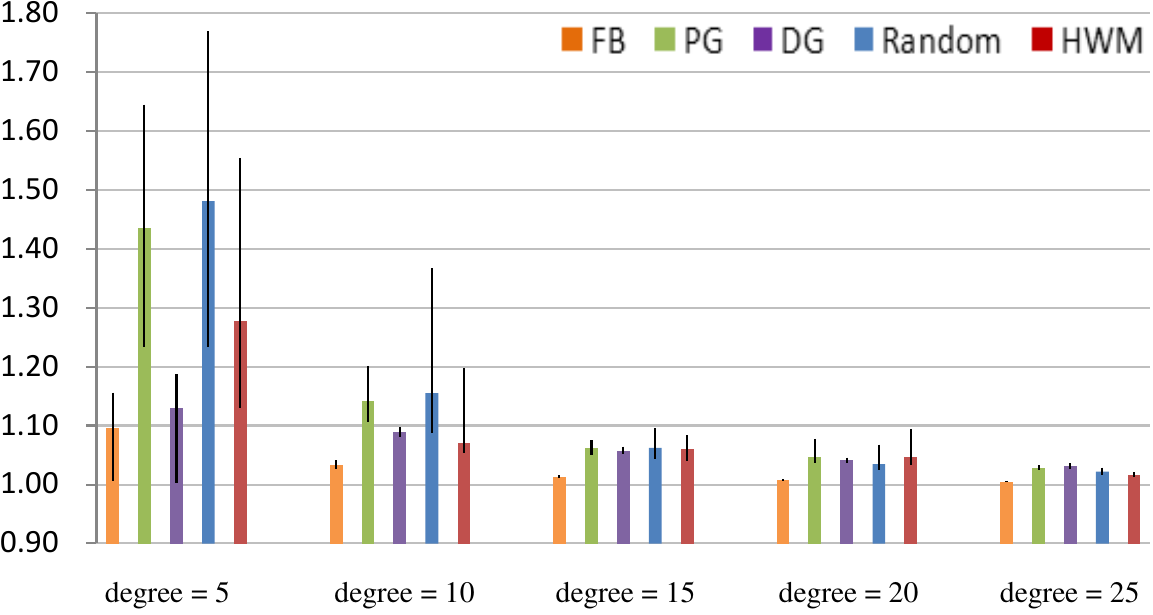}
		\caption{Gauss-Perturbation}
	\end{subfigure}
	\caption{\label{fig:degree:change}Comparison under different average degree with Random-Normalization and Gauss-Perturbation. The height of the bar is the average of competitive ratios on 50 input instances and the vertical lines indicate the fluctuation ranges.}
\end{figure}
\begin{figure}[h]
	\centering
	\begin{subfigure}[b]{\columnwidth}
		\centering\includegraphics[width=2.7in]{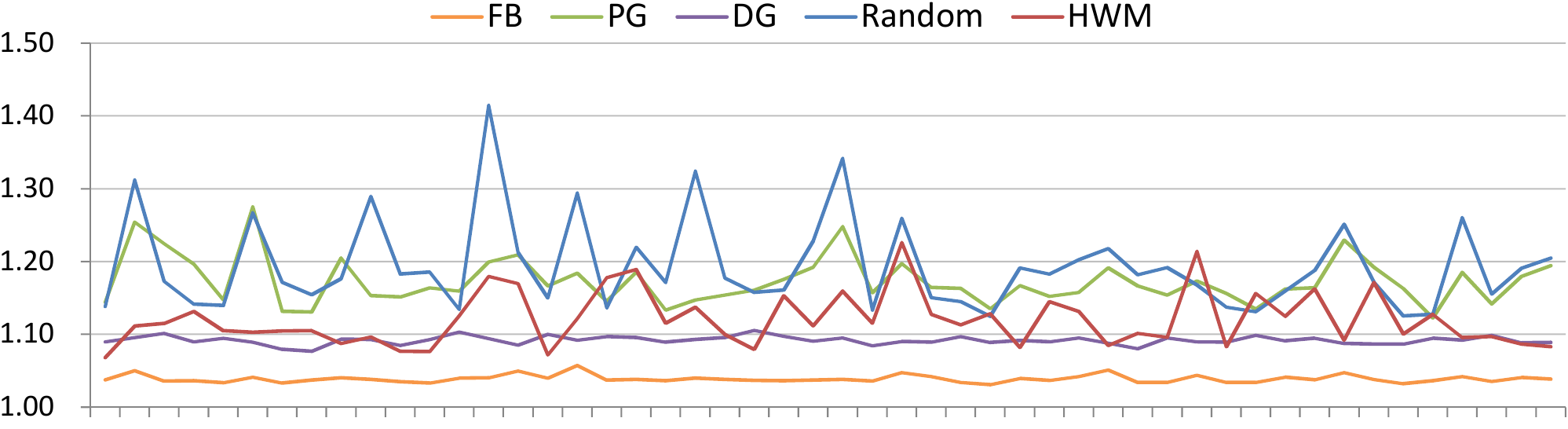}
		\caption{\label{fig:avg}Competitive ratios}
	\end{subfigure}
	\qquad\\
	\begin{subfigure}[b]{\columnwidth}
		\centering\includegraphics[width=2.7in]{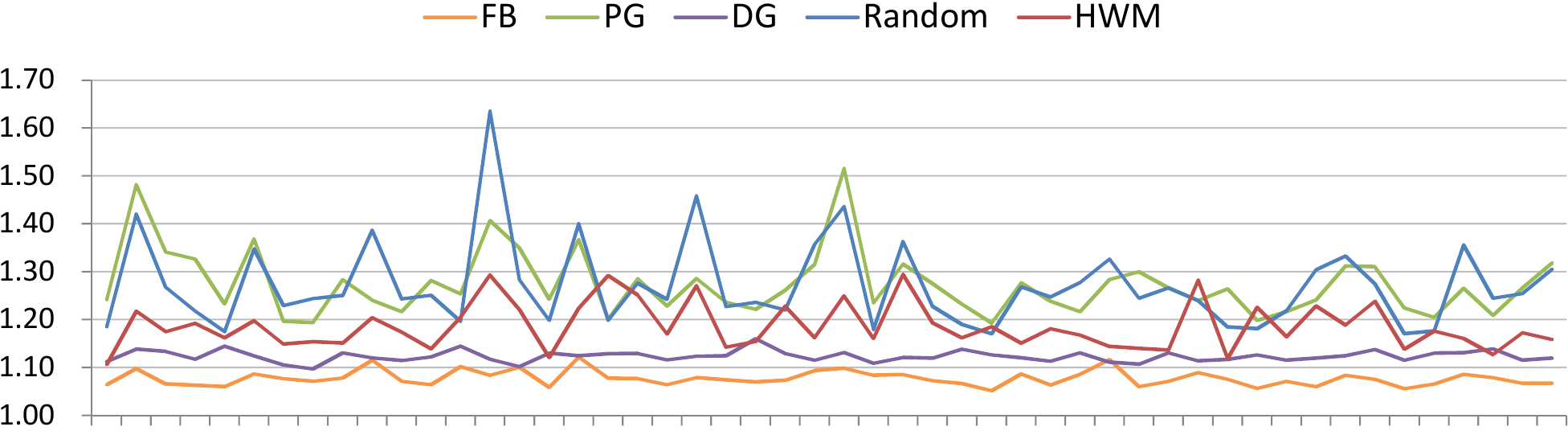}
		\caption{\label{fig:worst}Performance on the worst user visiting sequence}
	\end{subfigure}
	\caption{\label{fig:degree:50}Results for 50 input instances with average degree 10 and Random-Normalization. The horizontal axis stands for those input instances.}
\end{figure}

\begin{figure}[h!]
	\centering
	\begin{subfigure}[b]{0.49\columnwidth}
		\centering\includegraphics[height=0.84in]{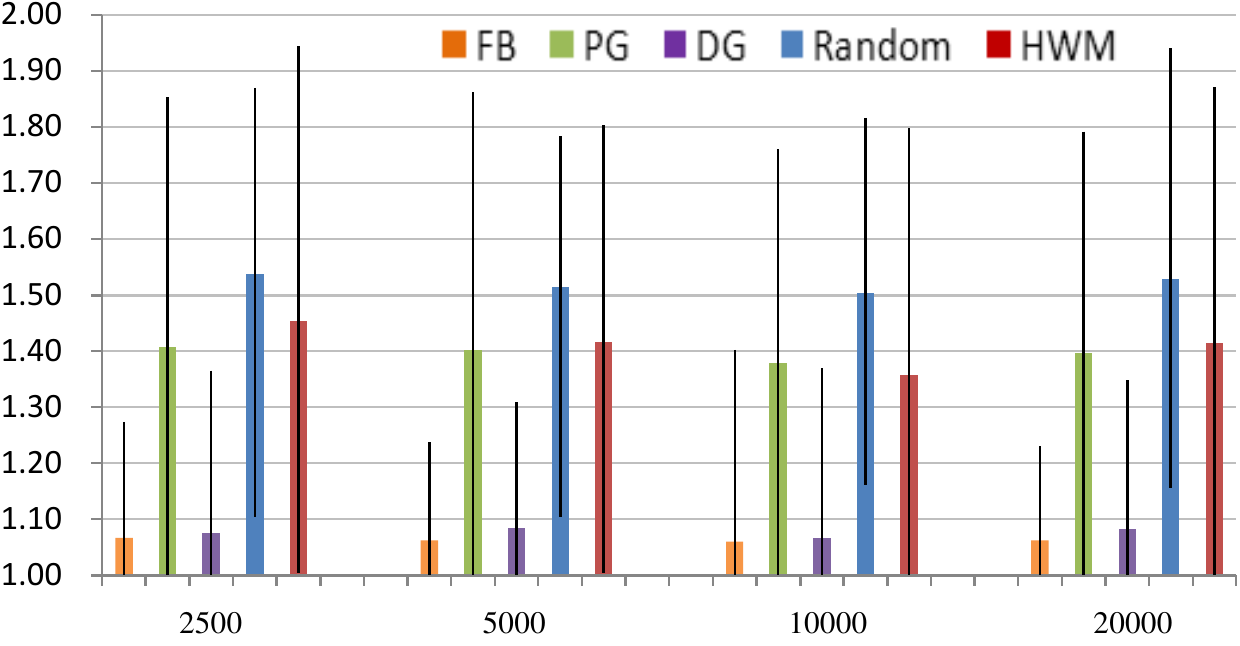}
		\caption{\label{fig:expo:random:a}degree $=5$}
	\end{subfigure}
	\begin{subfigure}[b]{0.49\columnwidth}
		\centering\includegraphics[height=0.84in]{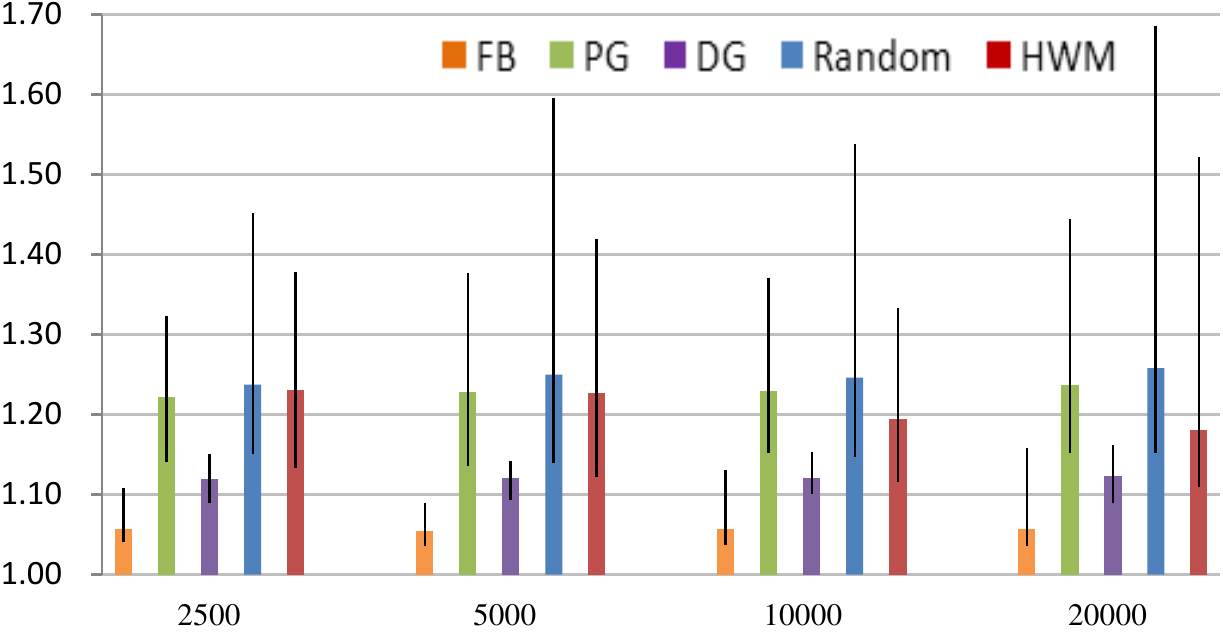}
		\caption{\label{fig:expo:random:b}degree $=10$}
	\end{subfigure}
	\caption{\label{fig:expo:random}Comparison under different exposures with Random-Perturbation.}
\end{figure}

Figure \ref{fig:expo:random} shows the results for Random-Normalization. 
The flow based delivery policy outperforms other polices under all exposure settings. 
We point out that the amount of exposures seems to have little influence on the performance for each delivery policy. It is somehow means the user traffic consumption of each studied policy has a linear relationship with the total amount of exposures for a fixed demand-supply graph. Besides, comparing Figure \ref{fig:expo:random:a} and \ref{fig:expo:random:b} and combining previous experiments, we can find that the structure of demand-supply is much more crucial for the GD problem.
Similar results are obtained for Gauss-Perturbation. These results can be found in supplemental material. 

\section{Conclusions and Future Works}
In this work, we study the consumption minimization problem for user traffic (CMPUT), which lies in the heart of GD scenario. We provide a lower bound for the optimal offline user traffic consumption by using an expected flow, which is shown to be useful for performance evaluation. We then propose a flow based delivery policy. This delivery policy consumes nearly as least as possible user traffic to satisfy all contracts under a realistic constraint. What's more important, it is robust with regard to estimation error of user type distribution and shown to be very useful for order and traffic management.

Our flow based delivery policy can be extended to solve other issues in GD, such as representativeness and smoothness of delivery. Nevertheless, there are some works remained to be done. For example, there may be a time window for each advertiser which indicates the ad campaign's starting time and deadline. Another interesting problem is how to design delivery policy when advertisers also come online instead of being revealed in advance. 
\bibliographystyle{aaai}
\bibliography{ref}

\clearpage
\noindent\textbf{Supplemental Material}\\
\appendix
\section{Omitted Proofs}
\noindent\textbf{Theorem \ref{theo:2}.} \textit{\theoperformance}\\
\begin{proof}
	According to \textsc{Theorem} \ref{theo:zoffline}, we have $E(OPT)\geq Z_{flow}$. By the definitions of $Z_{flow}$ and $\hat{F}_G$, it is clear that
	\begin{equation*}
		\textsc{Max-Flow}(\hat{F}_G(Z_{flow}))\geq M.
	\end{equation*}
	Thus, $\hat{Z}\leq Z_{flow}$, as $\hat{Z}$ takes the minimum value that satisfies $\textsc{Max-Flow}(\hat{F}_G(\hat{Z}))=M$ after the execution of \textsc{Online part}.

	Let $Y$ stand for the user traffic consumption of \textsc{Algorithm} \ref{alg:exptectedmatching}. Now we establish a relationship between $E(Y)$ and $\hat{Z}$. We first bound the probability that $Y$ is greater than $(1+\varepsilon')\hat{Z}$ for a positive number $\varepsilon'$. The range of $\varepsilon'$ will be determined later. Suppose there are $(1+\varepsilon')\hat{Z}$ users come to the platform, and denote $U_{j}$ as the number of users belong to type $u_j$ among them. $U_j$ is a random variable in fact.
According to the property of \textsc{Flow-Based-Delivery-Rule}, if $U_j\geq \ceil{\hat{Z}p_j}$, for any $a_i\in \Gamma(u_j)$, either no less than $C_{i,j}$ ads of $a_i$ is delivered to users belong to $u_j$ or all $W_i$ ads are delivered. Thus, if $U_j\geq \ceil{\hat{Z}p_j}$ is held for all $j\in [n]$, all $M$ ads are ensured to be delivered by our policy, as $\textsc{Max-Flow}(\hat{F}_G(\hat{Z}))=M$.
	
Now we bound the probability that $U_j\geq \ceil{\hat{Z}p_j}$ for $j\in [n]$. As $\hat{Z}+n\geq \sum_{j=1}^n \ceil{\hat{Z}p_j} \geq M$,
{
\begin{equation*}
	E(U_j)=(1+\varepsilon')\hat{Z}p_j > (M-n)p_{\min}=\Omega(n^{\varepsilon}),
\end{equation*}
} as $p_{\min}\leq \frac{1}{n}$. Let $\delta = \frac{\varepsilon'}{1+\varepsilon'}-\frac{1}{E(U_j)}$. If $\delta > 0$, we have
{\small
	\begin{equation*}
			\Pr(U_j< \ceil{\hat{Z}p_j}) \leq \Pr\left(U_j \leq \left(1-\delta \right)E(U_j)\right)
			=O\left(e^{-\delta^2E(U_j)/2}\right),
	\end{equation*}
}according to Chernoff Bound. Define $\delta'=\frac{\varepsilon'}{1+\varepsilon'}-\frac{1}{(1+\varepsilon')\hat{Z}p_{\min}}\geq 0$. Clearly, $\delta'\leq \delta$.
	We have {\small
	\begin{equation}
		\label{equ:temp:union}
		\Pr(\exists u_j\in U, U_j <\ceil{\hat{Z}}p_j) \leq cne^{-\delta'^2(1+\varepsilon')\hat{Z}p_{\min}/2},
	\end{equation}
}for a proper constant number $c>0$. 

	Let $F$ be the cumulative distribution function of $Y$. The Equation (\ref{equ:temp:union}) implies
	{\small
	\begin{equation}
		\label{equ:temp:y}
		F((1+\varepsilon')\hat{Z})\geq 1-cne^{-\delta'^2(1+\varepsilon')\hat{Z}p_{\min}/2}.
	\end{equation}
}

Next, we bound the expectation of $Y$ based on the above estimation. Since $Y$ is a discrete random variable, in order to simplify our proof, we define a \emph{continuous} random variable $Y'$ which satisfies $Y'\in [M,+\infty)$ and $$F'((1+\varepsilon')\hat{Z})=1-cne^{-\delta'^2(1+\varepsilon')\hat{Z}p_{\min}/2}, (\forall \delta'>0)$$ where $F'(y)$ is the cumulative distribution function of $Y'$. It should be emphasized that $F'$ can be taken as a function of $\varepsilon'$ because $\delta'$ is a function of $\varepsilon'$. This will be useful for further analysis. Let $T=(1+\varepsilon')\hat{Z}$, we have
{\small
	\begin{equation}
		\label{equ:temp:ey}
		\begin{split}
			E(Y) & = \sum_{i = M}^{+\infty}\left(F(i)-F(i-1)\right)i\\
			&\leq T+\sum_{i=T+1}^{+\infty}\left(F(i)-F(i-1)\right)i\\
			&= T+\lim_{b\rightarrow +\infty}\left(bF(b)-(T+1)F(T)-\sum_{i =T+1}^{b-1}F(i)\right).\\
		\end{split}
	\end{equation}
	}Note that for any integer $y$ that satisfies $y\geq T$, we have $F(y)\geq F'(y)$ according to their definitions, thus 
{\small
	\begin{equation*}
		\begin{split}
			0&\leq \lim_{b\rightarrow\infty}b(F(b)-F'(b))\leq \lim_{b\rightarrow\infty}b(1-F'(b))\\
			&= \lim_{x \rightarrow \infty}(1+x)\hat{Z}\cdot cne^{-\left(\frac{x}{1+x}-\frac{1}{(1+x)\hat{Z}p_{min}}\right)^2(1+x)\hat{Z}p_{\min}/2} = 0.
		\end{split}
	\end{equation*}
}The second line of above inequality is obtained by replacing $b$ with $(1+x)\hat{Z}$. Above inequality implies that $\lim_{b\rightarrow +\infty}bF(b)=\lim_{b\rightarrow +\infty}bF'(b)$. Applying this equation into Equation (\ref{equ:temp:ey}), we have
{\small
\begin{equation*}
		\begin{split}
			E(Y) & \leq T+\lim_{b\rightarrow +\infty}\left(bF'(b)-(T+1)F'(T)-\sum_{i =T+1}^{b-1}F'(i)\right)\\
	&= T+\sum_{i = T+1}^{+\infty} (F'(i)-F'(i-1))i \\
	&\leq T+ \int_{T}^{+\infty} xdF'(x)\\
	&= T+\int_{\varepsilon'}^{+\infty} (1+x)\hat{Z}\, d\,F'((1+x)\hat{Z})\\
	&=T+\int_{\varepsilon'}^{+\infty}(1+x)\hat{Z}\, d\, cne^{-\left(\frac{x}{1+x}-\frac{1}{(1+x)\hat{Z}p_{\min}}\right)^2(1+x)\hat{Z}p_{\min}/2}\\
\end{split}
\end{equation*}
}Using integration by parts and some tricks about inequality, we can show that when $\varepsilon'=\omega(n^{-\varepsilon})$, $E(Y)\leq T+o(1)$. The procedure is regular and tedious, thus we omit it in this work.

So far, we have $E(Y)\leq (1+\varepsilon')\hat{Z}+o(1)$ for any $\varepsilon'=\omega(n^{-\varepsilon})$.
	Combine the fact that $E(OPT)\geq Z_{flow}$ and $\hat{Z}\leq Z_{flow}$, we have $\frac{E(Y)}{E(OPT)}\leq 1+\varepsilon'+o\left(\frac{1}{M}\right)$.
	Recall that to employ the Chernoff Bound, we need $\delta\geq \delta' > 0$. As $\delta' \geq \frac{\varepsilon'}{1+\varepsilon'}-\frac{1}{(M-n)p_{\min}}$, $\delta' > 0$ is satisfied by setting $\varepsilon'$ to be any number greater than $\frac{1}{(M-n)p_{\min}-1}=O(n^{-\varepsilon})$. Thus  setting $\varepsilon'=\Theta\left(n^{-\varepsilon/2}\right)$, we have {\small$$\frac{E(Y)}{E(OPT)}= 1+\Theta\left(n^{-\varepsilon/2}\right)+o\left(\frac{1}{M}\right)=1+O\left(n^{-\varepsilon/2}\right).$$}
\end{proof}

\noindent\textbf{Theorem \ref{theo:robustness}.} \textit{\theorobustness}\\

Before the proof of this theorem, we first  provide several useful definitions and an important fact.

Given a user visiting sequence $\omega$ with length $t$, we construct a new user visiting sequence $\omega'$ by inserting some users into $\omega$ \emph{arbitrarily}. Here \emph{arbitrarily} means the number, positions and types of those inserted users are decided at will.  Suppose the length of $\omega'$ is $t'$ ($t'>t$), we say $\omega'$ is a \emph{$t'$-expansion} of $\omega$. With these definitions, we have the following fact for the flow based delivery policy. 

\begin{fact}
	\label{sec4:fact:1}
	Given a user visiting sequence $\omega$ with length $t$ and its $t'$-expansion $\omega'$, if all contracts are satisfied by the flow based delivery policy after the $t$-th user visiting on $\omega$, they will also be satisfied after the $t'$-th user visiting on $\omega'$.
\end{fact}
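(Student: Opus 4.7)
The plan is to prove Fact \ref{sec4:fact:1} by induction on the number of inserted users, reducing the general case to a single-insertion step, which is handled by a monotonicity property of the algorithm's state vector.

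The central ingredient I would establish is a Monotonicity Lemma for Algorithm \ref{alg:exptectedmatching}: if two runs of the algorithm process the same suffix of users starting from deficit vectors $W \leq W'$ (componentwise), where $W_i$ denotes the number of ads of campaign $a_i$ still to be delivered, then componentwise ordering of the deficit vectors is preserved at every subsequent step, and in particular at the end. By induction on the suffix length, it suffices to verify the single-step version. When a user of type $u_j$ arrives, the delivery rule picks from $S = \{a_i \in \Gamma(u_j) : W_i > 0\}$ on the $W$-side or $S' = \{a_i \in \Gamma(u_j) : W'_i > 0\}$ on the $W'$-side, and $W \leq W'$ gives $S \subseteq S'$. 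The interesting case is when the two picks $a_i$ (from $W$) and $a_{i'}$ (from $W'$) differ; then $a_{i'}$ cannot lie in $S$, because otherwise $a_{i'}$ would already attain the $\arg\max$ of $C_{a,j}$ over the smaller set $S$, and consistent tie-breaking would force both runs to agree. Hence $W_{i'} = 0$, so decrementing $W'_{i'}$ does not violate the ordering on coordinate $i'$, while the decrement of $W_i$ is absorbed by $W_i - 1 \leq W_i \leq W'_i$. All other coordinates are untouched. The cases in which both runs deliver the same ad, or in which one or both return nothing, are trivial.

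With the lemma in hand, the single-insertion step is immediate. Let $\omega^{\dagger}$ be $\omega$ with one extra user inserted at some position $p$. The two runs agree through step $p-1$. At step $p$ of $\omega^{\dagger}$ the extra user is processed, and the resulting deficit vector either equals, or has exactly one coordinate one less than, the deficit vector on $\omega$ at step $p-1$. From here both runs process the identical tail of $\omega$-users, so the Monotonicity Lemma gives that the final deficit vector on $\omega^{\dagger}$ is componentwise at most the final vector on $\omega$, which is zero by hypothesis; non-negativity then forces all $\omega^{\dagger}$-deficits to vanish. A routine induction on the number of inserted users promotes this to arbitrary $t'$-expansions. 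The main obstacle will be the disagreement case in the Monotonicity Lemma; the observation that the divergent choice $a_{i'}$ must have $W_{i'}=0$ on the smaller side is what makes componentwise monotonicity work. The exit-on-$M=0$ guard of Algorithm \ref{alg:exptectedmatching} can be ignored throughout, since once the deficit vector reaches zero any further iteration is a no-op on the state.
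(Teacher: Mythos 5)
Your strategy---induction on insertions plus a step-wise domination argument---is the natural way to make the paper's justification rigorous (the paper itself offers only one line: the fact is ``clear'' because the policy employs the edge capacities $C_{i,j}$ to guide delivery). But your Monotonicity Lemma has a genuine gap: its hypothesis tracks only the deficit vectors, while the state of \textsc{Algorithm} \ref{alg:exptectedmatching} that actually drives decisions is the pair $(W,C)$, since the \textsc{Greedy-Delivery-Rule} of \textsc{Algorithm} \ref{alg:greedy} selects $\arg\max_i C_{i,j}$ over the \emph{residual} capacities and then decrements $C_{i',j}$. Your one-step analysis of the disagreement case silently assumes that both runs evaluate the argmax over the \emph{same} numbers $C_{a,j}$ (``$a_{i'}$ would already attain the $\arg\max$ of $C_{a,j}$ over the smaller set $S$''). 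That assumption holds only while the two runs have never diverged, and it fails at the very first place you invoke the lemma: immediately after the inserted user is processed, the expanded run has decremented some $C_{i_0,j_0}$ that the original run has not, so from then on the two runs carry different capacity matrices and the hypothesis $W\leq W'$ alone does not determine the picks. Indeed, the lemma is false as stated: take two states with identical deficits $(1,1)$ for campaigns $a_1,a_2$ both targeting $u_1$, but capacities making one run prefer $a_1$ and the other prefer $a_2$; after a single $u_1$-arrival the deficits become $(0,1)$ and $(1,0)$, which are incomparable. So no proof can succeed without carrying information about $C$ through the induction.

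The gap is fixable, but the invariant must couple deficits and capacities. Writing unprimed quantities for the expanded run and primed for the original, one can maintain: (i) $W\leq W'$ componentwise, and (ii) for every campaign $a_i$ with $W_i>0$, $C_{i,j}\leq C'_{i,j}$ for all $j$ (an edge of a campaign can be under-consumed in the expanded run only once that run has already finished the campaign, because the expanded run's eligible set is contained in the original's). Two further ingredients make the disagreement case close: the maximum flow saturates every source edge, so $W_i=\sum_j C_{i,j}$ holds throughout both runs, whence $W_p=W'_p>0$ together with (ii) forces $C_{p,\cdot}=C'_{p,\cdot}$; and consistent tie-breaking. Then if the original run picks $a_p$ while the expanded run picks $a_q\neq a_p$ with $W_p=W'_p>0$, the chain $C_{q,l}\geq C_{p,l}=C'_{p,l}\geq C'_{q,l}\geq C_{q,l}$ (for the user's type $u_l$) forces all four values equal and the two runs to break the same tie in opposite directions, a contradiction; the same chain argument shows (ii) itself is preserved. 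With this strengthened invariant your single-insertion step and the outer induction go through verbatim, so your architecture is sound---it is the induction hypothesis that is too weak as written.
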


This fact is clear, as the flow based delivery policy employs the edge capacities $C_{i,j}$ to guide the delivery. 

Now we show the \textsc{Theorem} \ref{theo:robustness}.

\begin{proof}
    We first show that
    \begin{equation}
        \label{equ:exp:1}
        E_{\mathcal{D}}(\pi_f(\hat{\mathcal{D}}))\leq (1+\delta)E_{\hat{\mathcal{D}}}(\pi_f(\hat{\mathcal{D}})).
    \end{equation}
	To show this inequality, we construct a new user type distribution $\mathcal{D}'$ by adding a dummy user type $u_{n+1}$ and call those users belong to type $u_{n+1}$ as dummy users. In this distribution,
	{\small
	\begin{equation*}
		p'_j = \left\{
		\begin{aligned}
			&\frac{\hat{p}_j}{1+\delta},& &j\in [n]&\\
			&\frac{\delta}{1+\delta}.& &j = n+1&
		\end{aligned}\right.
	\end{equation*}
}
	Note that $p'_j \leq p_j$ for $j\in [n]$.

	Consider a user visiting sequence $\omega'$ sampled from distribution $\mathcal{D}'$. Suppose all contracts are satisfied by policy $\pi_f(\mathcal{D}')$ when $\tau'$-th user comes. As we can see, there may be some dummy users among those first $\tau'$ users. Intuitively, if we change those dummy users to other types randomly to make up the bias between $p'_j$ and $p_j$ for $j\in [n]$, the number of users consumed will be no more than $\tau'$ according to \textsc{Fact} \ref{sec4:fact:1}. More formally, for any dummy user in $\omega'$, we change her type to $u_j$ with probability $\frac{p_j-p_j'}{p'_{n+1}}$ for $j\in [n]$. Clearly, the obtained user visiting sequence can be viewed as satisfying the distribution $\mathcal{D}$. Based on \textsc{Fact} \ref{sec4:fact:1}, the user traffic consumed on the obtained sequence is no more than $\omega'$. Thus, we have $E_{\mathcal{D}'}(\pi_f(\hat{\mathcal{D}}))\geq E_{\mathcal{D}}(\pi_f(\hat{\mathcal{D}}))$.

    On the other hand, we have $$E_{\mathcal{D}'}(\pi_f(\hat{\mathcal{D}}))=\sum_{j=1}^{n+1} E_{\mathcal{D}'}(\pi_f(\hat{\mathcal{D}}, u_j))$$ where $E_{\mathcal{D}'}(\pi_f(\hat{\mathcal{D}}, u_j))$ is the number of users of type $u_j$ consumed.
	In next step, we aim to show
	{\small
	\begin{equation}
		\label{equ:exp:tmp:1}
		\sum_{j=1}^n E_{\mathcal{D}'}(\pi_f(\hat{\mathcal{D}}, u_j)) = E_{\hat{\mathcal{D}}}(\pi_f(\hat{\mathcal{D}})).
	\end{equation}
}
	That's the expected non-dummy user traffic used under $\hat{\mathcal{D}}$ and $\mathcal{D}'$ are same.
	
	As $p'_j$ is in proportion to $\hat{p}_j$ for $j\in [n]$, the edge capacity $\hat{C}_{i,j}$ calculated based on distribution $\hat{\mathcal{D}}$ is the same with $C'_{i,j}$ calculated based on $\mathcal{D}'$.
	Given a user visiting sequence ${\omega'}$ sampled according to distribution $\mathcal{{D'}}$. Suppose the length of $\omega'$ is $t$ and all contracts are satisfied exactly when $t$-th user in $\omega'$ comes.
	We drop out all dummy users in $\omega'$, and obtain a user sequence $\hat{\omega}$ which is a valid sequence with regard to distribution $\hat{\mathcal{D}}$. We say $\hat{\omega}$ is a dropped sequence of $\omega'$. As the online part of our flow based delivery policy only employs edge capacities to perform deliver, all ad contracts must be satisfied by $\pi(\hat{\mathcal{D}})$ exactly when the last one in $\hat{\omega}$ comes.
	
	Note that there are same numbers of non-dummy users in $\omega'$ and $\hat{\omega}$. That's to say, for any user sequence sampled based on distribution $\mathcal{D}'$, there is a dropped sequence sampled from $\hat{\mathcal{D}}$, and the non-dummy user traffic consumed are same. For $\hat{\omega}$, denote $S(\hat{\omega})$ as the set of all the sequences sampled from $\mathcal{D'}$ and theirs dropped sequences are $\hat{\omega}$. It is clear that $\Pr(\hat{\omega}|\hat{\mathcal{D}}) = \sum_{\omega'\in S(\hat{\omega})}\Pr(\omega'|\mathcal{D})$, as the latter can be viewed as inserting dummy users arbitrarily into the former. Thus the Equation (\ref{equ:exp:tmp:1}) is held.
	
	According to Wald's Identity, we have  $$E_{\mathcal{D}'}(\pi_f(\hat{\mathcal{D}}, u_{n+1}))=p'_{n+1}E_{\mathcal{D}'}(\pi_f(\hat{\mathcal{D}}))=\frac{\delta}{1+\delta}E_{\mathcal{D}'}(\pi_f(\hat{\mathcal{D}})),$$ thus we can conclude that $E_{\mathcal{D}'}(\pi_f(\hat{\mathcal{D}})) = (1+\delta)E_{\hat{\mathcal{D}}}(\pi_f(\hat{\mathcal{D}}))$. Combining above analysis we get the inequality (\ref{equ:exp:1}).

	Denote $Z_{flow}(\mathcal{D})$ as the expected flow with user type distribution $\mathcal{D}$. Similarly we can define $Z_{flow}(\hat{\mathcal{D}})$. As $(1-\delta)p_j\leq \hat{p}_j$, we can get $Z_{flow}(\hat{\mathcal{D}})\leq \frac{Z_{flow}(\mathcal{D})}{1-\delta}$ based on the definition of expected flow. According to \textsc{Theorem} \ref{theo:2}, given $Mp_{\min}=\Theta(n^{\varepsilon})$, we have
    \begin{eqnarray*}
        &Z_{flow}(\hat{\mathcal{D}})\leq E_{\hat{\mathcal{D}}}(\pi_f(\hat{\mathcal{D}}))\leq (1+O(n^{-\varepsilon/2})) Z_{flow}(\hat{\mathcal{D}}),\\
        &Z_{flow}({\mathcal{D}})\leq E_{{\mathcal{D}}}(\pi_f({\mathcal{D}}))\leq (1+O(n^{-\varepsilon/2})) Z_{flow}({\mathcal{D}}).
    \end{eqnarray*}
    It is easy to get that
    \begin{equation}
        \label{equ:exp:2}
        E_{\hat{\mathcal{D}}}(\pi_f(\hat{\mathcal{D}})) \leq \frac{1+O(n^{-\varepsilon/2})}{1-\delta}E_{{\mathcal{D}}}(\pi_f({\mathcal{D}})).
    \end{equation}
    Combining Equation (\ref{equ:exp:1}) and (\ref{equ:exp:2}), we can finish this proof.
\end{proof}
\section{Extensions of the Flow Based Delivery Policy}
\label{sec:name:extensions}
\noindent\textbf{Representativeness.}\ \ Representativeness requires that the exposures to users of an ads should be diverse. More specifically, when all contracts are satisfied, $\frac{U_{ij}}{W_i}$ should be close to $\frac{U_j}{U(\Gamma(a_i))},$ where $U_j$ is the total user used belonging to type $u_j$, $U_{ij}$ is the number of ad $a_i$ displayed to users with type $u_j$ and $U(\Gamma(a_i))$ is the total number of users used with types in $\Gamma(a_i)$.

The flow based delivery policy can be easily modified to achieve representativeness approximately. In the \textsc{Offline part}, we can just set the edge capacity $C_{i,j}$ as {\small$\frac{W_i\cdot p_j}{\sum_{u_l\in \Gamma(a_i)}p_l}$} and needn't calculate the $\hat{Z}$.
We also modify the \textsc{Algorithm} \ref{alg:greedy} by changing the condition in line \ref{line:if} to be
\begin{center}
	\textit{we can't find such an $i'$ or $C_{i',j}=0$}.
\end{center}
With this modified \textsc{Greedy-Delivery-Rule}, when all contracts are satisfied, we have
\begin{equation}
	\label{equ:temp:4}
	\frac{U_{ij}}{W_i} = \frac{C_{i,j}}{W_i} = \frac{p_j}{\sum_{u_l\in \Gamma(a_i)}p_l}
\end{equation}

Suppose that when all contracts are satisfied, there are $Y$ users visiting the publisher platform. Based on the Chernoff Bound, we can prove that for a constant $\delta>0$ and $j\in [m]$,
	$(1-\delta)Yp_j\leq U_j\leq (1+\delta)Yp_j$
with high probability.
Consequently, for any $\delta>0$, $(a_i, u_j)\in E$,
{\small
\begin{equation*}
	\frac{U_j}{U(\Gamma(a_i))} \in \left[\frac{(1-\delta)p_j}{(1+\delta)\sum_{u_l\in \Gamma(a_i)}p_l},\frac{(1+\delta)p_j}{(1-\delta)\sum_{u_l\in \Gamma(a_i)}p_l}\right]
\end{equation*}
}with high probability. Combining with Equation (\ref{equ:temp:4}), we can achieve an approximation for representativeness.

\vspace{0.2cm}
\noindent\textbf{Smoothness of Delivery.}\ \ Smoothness of delivery requires that the unexposed amount of each ad decreases steadily with time. As mentioned in work \cite{Chen2012Ad}, the perfect smoothness is usually impossible to achieve due to the intersection of different advertisers. We try to make the delivery procedure smooth while keeping the user traffic consumption unchanged. To achieve this objective, we employ a new delivery rule, that is the \textsc{Smooth-Delivery-Rule} shown in \textsc{Algorithm} \ref{alg:smooth}.
As we can see, when a user visiting the publish platform, the \textsc{Smooth-Delivery-Rule} choose an unsatisfied ad $a_{i}$ to minimize the ratio $\frac{\hat{C}_{i,j}}{C_{i,j}}$. As the type of users are drawn from the distribution $\mathcal{D}$, the smoothness is ensured to some extent.
\begin{algorithm}[h!]
	\caption{\label{alg:smooth}\textsc{Smooth-Delivery-Rule}}
	\Indm
	\textsc{Init:} $\forall (a_i, u_j)\in E$, set $\hat{C}_{i,j}\leftarrow C_{i,j}$\\
	\textsc{Smooth-Delivery-Rule}$(u_j)$\\
	\Indp
	Set $i'\gets \arg\min_{a_i\in \Gamma(u_j), W_j>0}\left\{\frac{\hat{C}_{i,j}}{C_{i,j}}\right\}$.\\
	\If{we can't find such an $i'$}
	{
		\textbf{return} $NULL$.
	}
	Set $\hat{C}_{i',j}\gets \hat{C}_{i',j}-1$.\\
	\textbf{return} $i'$.
\end{algorithm}

\vspace{0.2cm}
\noindent\textbf{Multiple Delivery.}\ \ To generate more profit, the publishers usually display multiple ads in a single web page. Now we suppose $k$ ads can be displayed when a user comes. Our flow based delivery policy can be easy extended to cover this setting. In \textsc{Algorithm} \ref{alg:exptectedmatching}, we construct an expected flow $\hat{F}_G(\hat{Z}, k)$ instead of $\hat{F}_G(\hat{Z})$. There are two changes in $\hat{F}_G(\hat{Z}, k)$: capacity of edge $(a_i, u_j)\in E$ is set to $\hat{Z}$ and that of edge $(u_j, t)$ is set to $\ceil{\hat{Z}kp_j}$. The first change ensures that all contracts can be satisfied with constraint that one ad can't be displayed more than once for single user visiting. Through the second change, we somehow copy the user $k$ times when she visits the platform. We also modify the \textsc{Algorithm} \ref{alg:greedy} and let it return $k$ ads whose residual capacity values $C_{i,j}$ are the top $k$ minimum when a user of type $u_j$ comes. With similar analysis, we can show that the performance of the modified flow based delivery policy keeps unchanged for the multiple delivery setting.

\section{Omitted Experiment Results}
The results for 50 input instances with average degree 10 and Random-Normalization are shown in Figure \ref{fig:expo:50}. 
Similar experiments results for Gauss-Perturbation are shown in Figure \ref{fig:expo:gauss} and \ref{fig:expo:50:gauss}. 
\begin{figure}[h!]
	\centering
	\begin{subfigure}[b]{\columnwidth}
		\centering\includegraphics[width=3in]{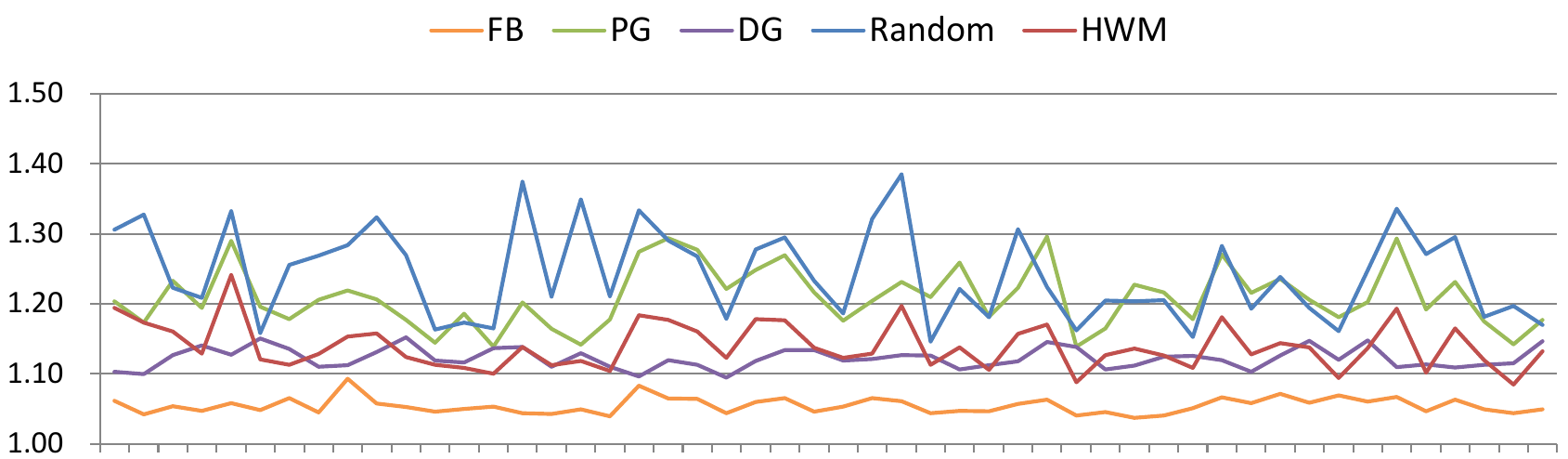}
		\caption{Competitive ratio for 50 input instances}
	\end{subfigure}
	\qquad\\
	\vspace{0.2cm}
	\begin{subfigure}[b]{\columnwidth}
		\centering\includegraphics[width=3in]{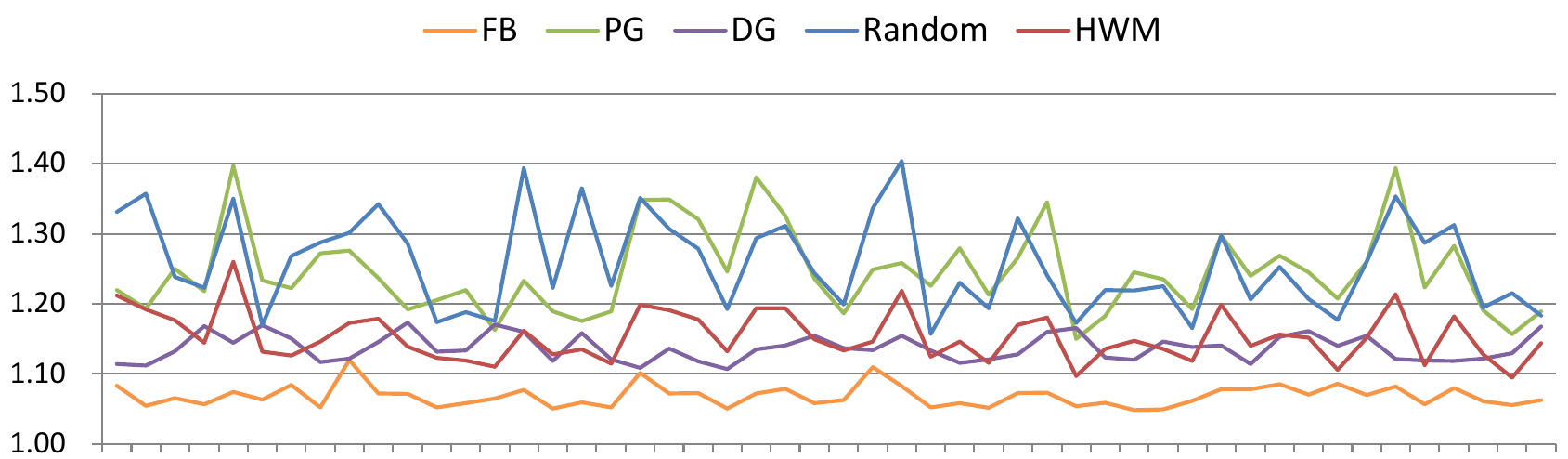}
		\caption{Worst case approximation ratio}
	\end{subfigure}
	\caption{\label{fig:expo:50}Results for 50 input instances with average degree 10 and Random-Normalization. The number of exposures are drawn uniformly from 100 to 5000. The horizontal axis stands for 50 input instances.}
\end{figure}

\begin{figure}
	\centering
	\begin{subfigure}[b]{\columnwidth}
		\centering\includegraphics[width=2.8in]{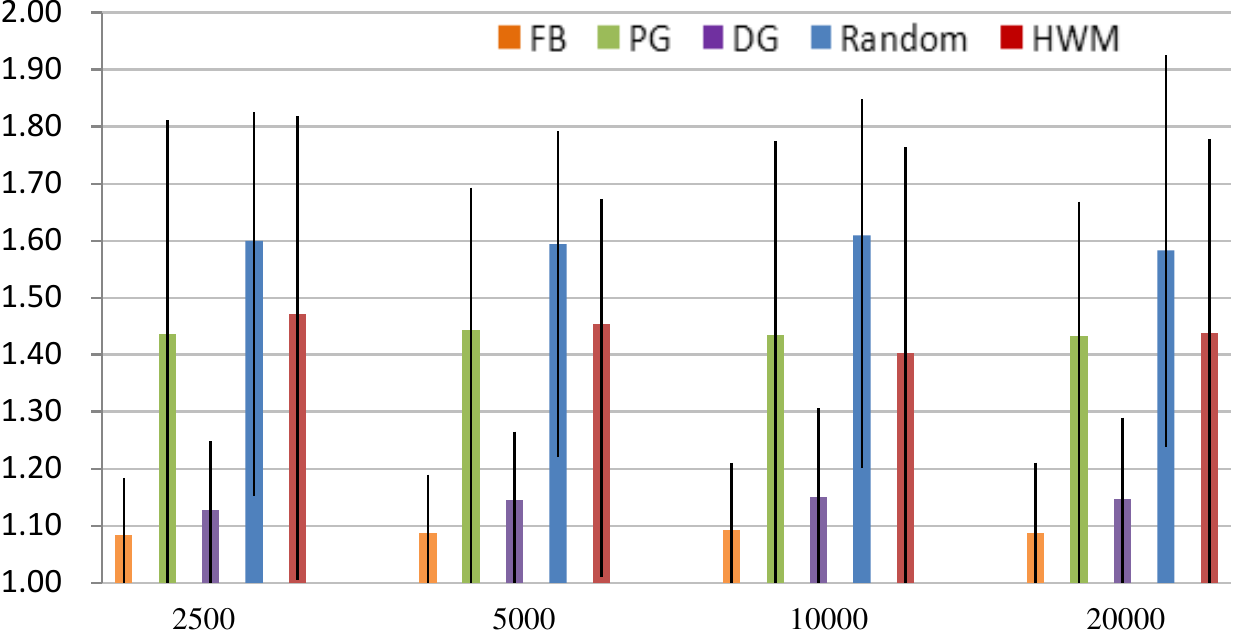}
		\caption{degree $=5$}
	\end{subfigure}
	\begin{subfigure}[b]{\columnwidth}
		\centering\includegraphics[width=2.8in]{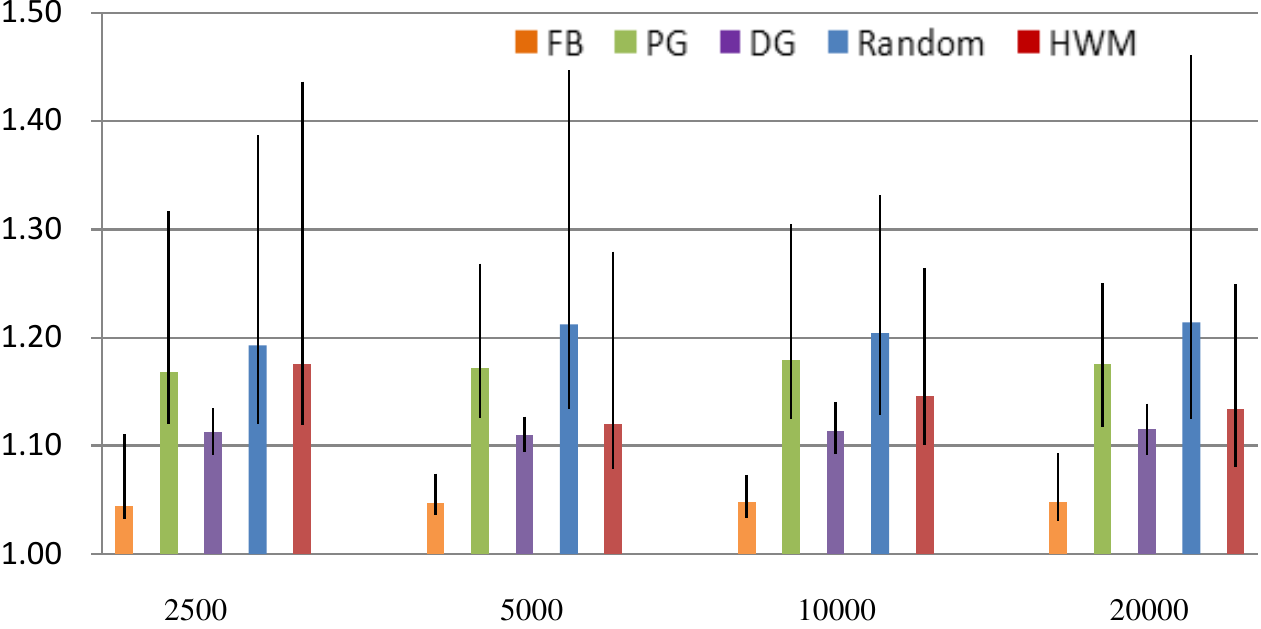}
		\caption{degree $=10$}
	\end{subfigure}
	\caption{\label{fig:expo:gauss}Comparison under different exposures with Gauss-Perturbation.}
\end{figure}

\begin{figure}
	\centering
	\begin{subfigure}[b]{\columnwidth}
		\centering\includegraphics[width=3in]{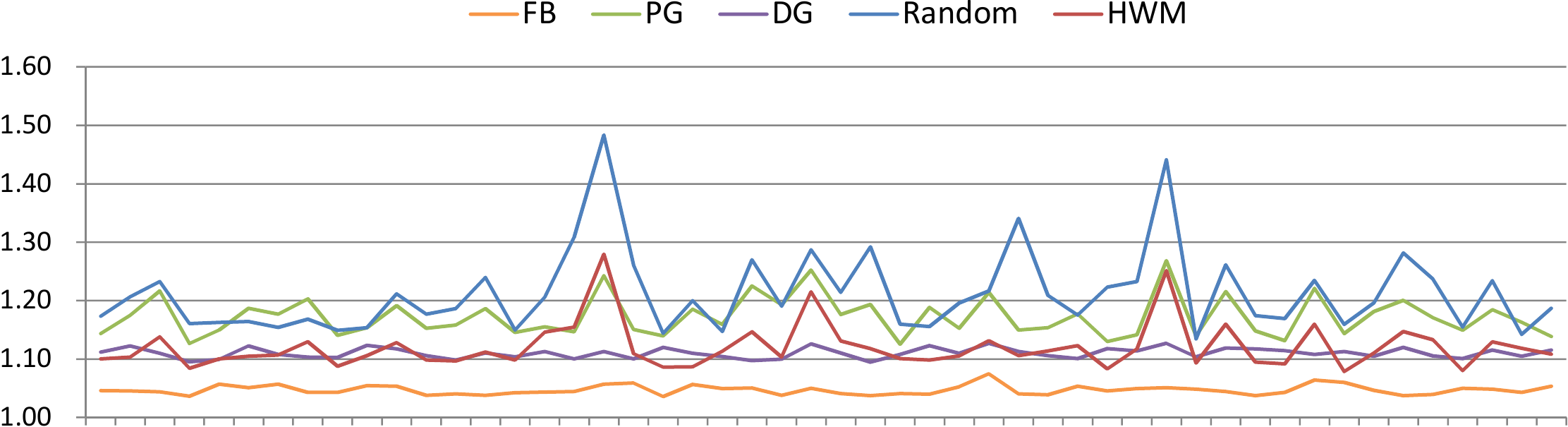}
		\caption{Competitive ratio for 50 input instances}
	\end{subfigure}
	\qquad\\
	\vspace{0.2cm}
	\begin{subfigure}[b]{\columnwidth}
		\centering\includegraphics[width=3in]{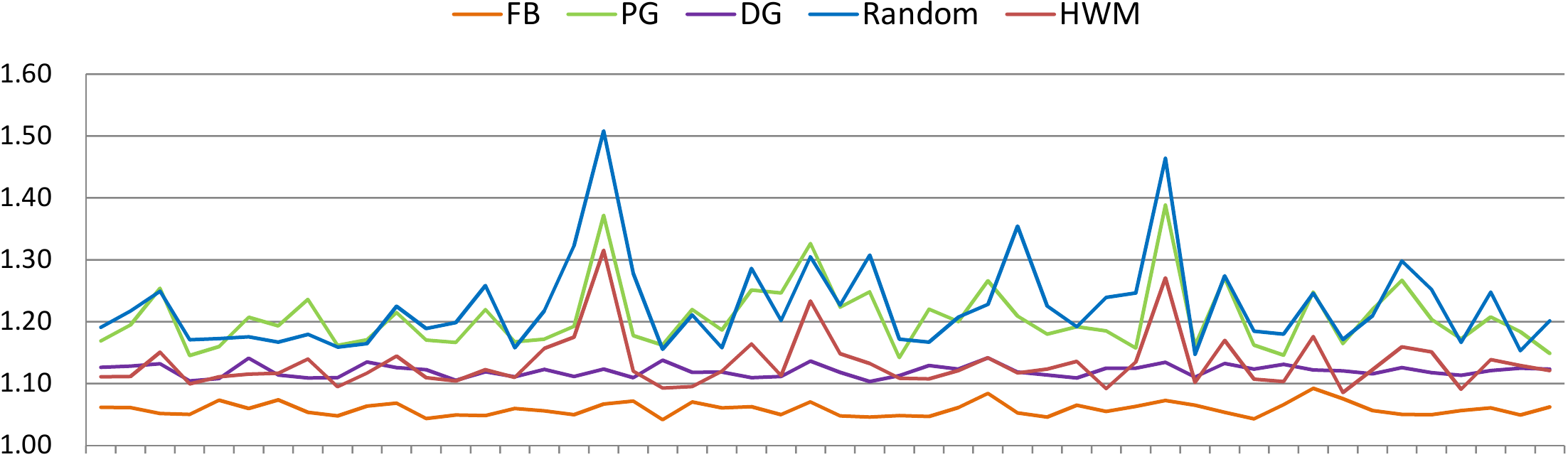}
		\caption{Worst case approximation ratio}
	\end{subfigure}
	\caption{\label{fig:expo:50:gauss}Results for 50 input instances with average degree 10 and Gauss-Normalization. The number of exposures are drawn uniformly from 100 to 5000. The horizontal axis stands for 50 input instances.}
\end{figure}
\end{document}